\theoremstyle{plain}
\newtheorem{theorem}{Theorem}
\newtheorem{lemma}{Lemma}
\renewenvironment{proof}{{\noindent\bfseries Proof.}}{$\blacksquare$}
\begin{document}

\title{Autophosphorylation and the dynamics of the activation of Lck}

\author{Lisa Maria Kreusser\\
Department for Applied Mathematics and Theoretical Physics\\  
University of Cambridge\\
Wilberforce Road\\
Cambridge CB3 0WA, UK\\
and\\
Alan D. Rendall\\
Institut f\"ur Mathematik\\
Johannes Gutenberg-Universit\"at\\
Staudingerweg 9\\
D-55099 Mainz\\
Germany}

\date{}

\maketitle

\begin{abstract}
Lck (lymphocyte-specific protein tyrosine kinase) is an enzyme which plays
a number of important roles in the function of immune cells. It belongs to the  
Src family of kinases which are known to undergo autophosphorylation. It turns
out that this leads to a remarkable variety of dynamical behaviour which can
occur during their activation. We prove that in the presence of
autophosphorylation one phenomenon, bistability, already occurs in a
mathematical model for a protein with a single phosphorylation site. We
further show that a certain model of Lck exhibits oscillations.
Finally we discuss the relations of these results to models in the literature
which involve Lck and describe specific biological processes, such as the early
stages of T cell activation and the stimulation of T cell responses resulting
from the suppression of PD-1 signalling which is important in immune
checkpoint therapy for cancer.
\end{abstract}  

\section{Introduction}

Phosphorylation and dephosphorylation, the processes in which proteins are
modified by the addition or removal of phosphate groups, play an important
role in biology. The activity of an enzyme is influenced by its
phosphorylation state and 
these processes provide a way of switching enzymes on and off quickly. The 
enzymes which catalyse phosphorylation and dephosphorylation are called kinases
and phosphatases, respectively. The phosphorylation of a protein X is usually
catalysed by another protein Y. It may also be catalysed by X itself, a
process called autophosphorylation. This can happen either in {\it trans} (one
molecule of X catalyses the phosphorylation of a site on another molecule of
X) or in {\it cis} (a molecule of X catalyses the phosphorylation of a site on
that same molecule). Here we are concerned with the kinase Lck
\cite{bommhardt19}, which can undergo both autophosphorylation in {\it trans}
and phosphorylation by another kinase Csk. Lck belongs to the Src family
of kinases \cite{shah18} which have many properties in common, in particular
those related to their phosphorylation.

In what follows we are interested in understanding the way in which the
activity of Lck is controlled, an issue which is important for analysing how
the function of immune cells is regulated. More specifically, we want to do so
by studying mathematical models for phosphorylation processes. There has been
a lot of work on models for cases where there is a clear distinction between
substrates and enzymes. A standard example is the multiple futile cycle
where bounds for the maximal number of steady states were obtained in
\cite{wang08} and \cite{flockerzi14} and for the maximal number of stable
steady states in \cite{feliu19}. Much less is known in the case of
autophosphorylation.
To our knowledge the earliest papers on mathematical modelling of Src family
kinases are by Fu\ss\ et al. \cite{fuss06}, \cite{fuss08}. In the first of
these papers the authors consider a system coupling Src (with
autophosphorylation included) to Csk and the phosphatase PTP$\alpha$. They
then introduce a simplification by assuming the concentration of Csk to be
constant, and find two fold bifurcations in simulations. In particular, this
system appears to exhibit bistability. In \cite{fuss08} sustained oscillations
and infinite period bifurcations were observed in a slight extension of the
model of \cite{fuss06}. These dynamical features occurred in a context where the
basic system describing phosphorylation and dephosphorylation of Src is
embedded in feedback loops. In fact it was found in \cite{kaimachnikov09}
that complicated dynamical behaviour is possible even without the feedback
loops. More recently the dynamics of a model for autophosphorylation of a
protein with only one phosphorylation site was studied in \cite{doherty15}. In
that case also two fold bifurcations were observed. The model considered there
is one-dimensional and thus relatively easy to analyse. The bistability
found in \cite{doherty15} contrasts with the situation in the multiple futile
cycle where in the case of a single phosphorylation site there is only one
steady state.

In Section \ref{sec:2} a model for autophosphorylation is introduced which is of central
importance in what follows and it is shown that in a certain Michaelis-Menten
limit it can be reduced to a one-dimensional model. Section \ref{sec:3} contains an
analysis of some properties of solutions of this reduced model. In particular
it is shown that this system can exhibit more than one stable steady state.
This section provides a rigorous treatment of some features found in the
simulations of \cite{doherty15}. The property of bistability is lifted to the
original model. The main results are Theorems \ref{th:1}-\ref{th:3}. The model of Section \ref{sec:2}
without external kinase only exhibits bistability under the condition that 
phosphorylation has an activating effect on the enzyme. The corresponding 
case with inhibition exhibits no multistability. The aim of Section \ref{sec:4} is to 
show that in the case of an inhibitory phosphorylation multistability can be 
restored by modelling the external kinase explicitly. The main result is Theorem \ref{th:4}. 
Here, in contrast to the 
results of Section \ref{sec:3}, the multistability is not present in the 
Michaelis-Menten limit.

Section \ref{sec:5} is concerned with a model for Lck which can be reduced by timescale
separation to a two-dimensional one. The original model inherits certain
patterns of behaviour such as bistability, Hopf bifurcations and homoclinic
orbits from the two-dimensional one. It is proved that the two-dimensional
model does exhibit these phenomena as a consequence of the occurrence of a
Bogdanov-Takens bifurcation. The main result is Theorem \ref{th:5}. In Section \ref{sec:6} the
models analysed in the present paper are compared with ones which occur as
parts of more comprehensive models in the literature describing some concrete
biological situations. Section \ref{sec:7} presents some ideas on possible further
developments of the results of this paper.

\section{The basic model}\label{sec:2}

Consider a protein with one phosphorylation site. We denote the
unphosphorylated form of this protein by X and the phosphorylated form by
Y. Suppose X is able to catalyse its own phosphorylation in {\it trans}.
The simplest model for this reaction is 2X$\to$ X+Y. If Y is also able to
catalyse the phosphorylation of X then this can be modelled by the reaction
X+Y$\to$ 2Y. The basic model considered in what follows includes these two
reactions together with phosphorylation of X catalysed by a kinase E and
dephosphorylation of Y catalysed by a phosphatase F. Mass action kinetics is
assumed for the autophosphorylation reactions. For the other two processes we
use a description with mass action kinetics involving a substrate, an enzyme
and a complex, which we call an extended Michaelis-Menten description.

The concentrations of X, Y, E, F and the complexes XE and YF are denoted by
$x$, $y$, $e$, $f$, $d$ and $c$, respectively. The evolution equations are of
the form
\begin{eqnarray}
  &&\dot x=-k_1x^2+k_4c-k_5ex+k_6d-k_8xy,\label{basic1}\\
  &&\dot d=k_5ex-(k_6+k_7)d,\label{basic2}\\
  &&\dot e=-k_5ex+(k_6+k_7)d,\label{basic3}\\
  &&\dot c=k_2fy-(k_3+k_4)c,\label{basic4}\\
  &&\dot f=-k_2fy+(k_3+k_4)c,\label{basic5}\\
  &&\dot y=k_1x^2-k_2fy+k_3c+k_7d+k_8xy,\label{basic6}
\end{eqnarray}
where the dot stands for the derivative with respect to $t$ and the $k_i$ are
positive reaction constants. There are three
conserved quantities defined by the total amounts of the substrate and the two
enzymes $E$ and $F$. These are $A=x+c+d+y$, $B=c+f$ and $C=d+e$. A situation
where the amounts of both enzymes and the rates of both autophosphorylation 
reactions are small can be described using a
Michaelis-Menten reduction. To do this, introduce new variables by means of
the relations $k_1=\epsilon\tilde k_1$, $k_8=\epsilon\tilde k_8$,
$c=\epsilon\tilde c$, $f=\epsilon\tilde f$, $d=\epsilon\tilde d$,
$e=\epsilon\tilde e$ and $\tau=\epsilon t$. Substituting these relations into
the above equations and dropping the tildes gives
\begin{eqnarray}
  &&x'=-k_1x^2+k_4c-k_5ex+k_6d-k_8xy,\label{basicr1}\\
  &&y'=k_1x^2-k_2fy+k_3c+k_7d+k_8xy,\label{basicr2}\\
  &&\epsilon d'=k_5ex-(k_6+k_7)d,\label{basicr3}\\
  &&\epsilon e'=-k_5ex+(k_6+k_7)d,\label{basicr4}\\
  &&\epsilon c'=k_2fy-(k_3+k_4)c,\label{basicr5}\\
  &&\epsilon f'=-k_2fy+(k_3+k_4)c,\label{basicr6}
\end{eqnarray}
where the prime stands for the derivative with respect to $\tau$. If we set
$\epsilon=0$ in these equations the last four become algebraic.
Combining these with the conservation laws and doing the usual algebra for
Michaelis-Menten reduction leads to the relations $c=\frac{By}{K_{M1}+y}$ and
$d=\frac{Cx}{K_{M2}+x}$ where $K_{M1}=\frac{k_3+k_4}{k_2}$ and
$K_{M2}=\frac{k_6+k_7}{k_5}$. It follows that 
\begin{equation}\label{scalar}
x'=-k_1x^2+\frac{Bk_4y}{K_{M1}+y}-\frac{Ck_7x}{K_{M2}+x}-k_8{xy}
\end{equation}
while $y$ satisfies an analogous equation. These two equations are equivalent
because $x+y$ is a conserved quantity for $\epsilon=0$. Thus the whole dynamics
is contained in the single equation (\ref{scalar}) in that case. When $C=0$ (no
external kinase) the equation for $y$ reduces (up to a difference of notation)
to the equation (1) in \cite{doherty15}. To make it clear that this is an
equation for a single unknown it is necessary to use the conserved quantity
$A=x+y$. Thus for $C=0$ the evolution equation for $y$ is
\begin{equation}\label{scalar2}
y'=k_1(A-y)^2+k_8(A-y)y-\frac{Bk_4y}{K_{M1}+y}.
\end{equation}

\section{Analysis of the model of Doherty et al.}\label{sec:3}

In \cite{doherty15} the authors describe certain aspects of the dynamics of
solutions of equation (\ref{scalar2}). Here we complement their analysis by
giving rigorous proofs of some of these. Steady states of this equation are
zeroes of the polynomial
\begin{align}
p_3(y)&=[(k_1-k_8)y^2+(-2k_1A+k_8A)y+k_1A^2](K_{M1}+y)-Bk_4y\nonumber\\
&= -(\alpha-1)y^3+[-K_{M1}(\alpha-1)+A(\alpha-2)]y^2\nonumber\\
&\quad+[K_{M1}A(\alpha-2)+A^2-Bk_1^{-1}k_4]y+ K_{M1}A^2    
\end{align}
where $\alpha=\frac{k_8}{k_1}$. Positive steady states of the evolution
equations for $x$ and $y$ are in one to one correspondence with roots of this
polynomial in the interval $(0,A)$. Note that $p_3(0)>0$ and $p_3(A)<0$. If
$k_1-k_8>0$ then $p_3$ must have one root with $x<0$ and one with $x>A$. Thus
it has exactly one root in the biologically relevant region. When $k_1-k_8<0$
there could be up to three roots in $(0,A)$. Since no root can cross the
endpoints of the interval the number of roots counting multiplicity is odd
for any values of the parameters. In biological terms, bistability
is only possible when phosphorylation activates the enzyme. In the case of
Lck there are two phosphorylation sites of central importance for the
regulation of the kinase activity, Y394 and Y505, whose phosphorylation
is activatory and inhibitory, respectively. Thus if we wanted to use 
this model to describe Lck with mutations targeting one of its 
phosphorylation sites then to have a chance of bistability it is the 
inhibitory site Y505 which should be knocked out. This type of modification
of Lck has been studied experimentally in \cite{amrein88}. It was discovered
that the mutated protein exhibits carcinogenic effects.

It will now be shown that there is a region in parameter space where three
positive steady states exist.

\begin{theorem}\label{th:1}
If $\alpha>2$, $A^2k_1<Bk_4$, $k_8$ is sufficiently large and
$K_{M1}$ is sufficiently small for fixed values of the other parameters then
the equation (\ref{scalar2}) has three hyperbolic steady states, of which two
are asymptotically stable and the other unstable.
\end{theorem}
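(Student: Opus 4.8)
The plan is to reduce the statement to a root-counting problem for the cubic $p_3$ and then to produce the required three roots by a perturbation argument based at $K_{M1}=0$. First I would record that $p_3(y)=(K_{M1}+y)\,g(y)$, where $g$ denotes the right-hand side of (\ref{scalar2}); since $K_{M1}+y>0$ on $(0,A)$, the positive steady states of (\ref{scalar2}) are exactly the roots of $p_3$ in $(0,A)$, and the sign of $g$ agrees with that of $p_3$ there. For a scalar equation a steady state $y^\ast$ is hyperbolic and asymptotically stable precisely when $g'(y^\ast)<0$, and at a root one has $g'(y^\ast)=p_3'(y^\ast)/(K_{M1}+y^\ast)$, so $g'(y^\ast)$ has the same sign as $p_3'(y^\ast)$. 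Hence if I can show that $p_3$ has three simple roots $y_1<y_2<y_3$ in $(0,A)$, then, since $p_3(0)>0$ and the roots are simple, the sign of $p_3$ alternates as $+,-,+,-$ across $(0,A)$ (consistently with $p_3(A)<0$), forcing $p_3'(y_1)<0$, $p_3'(y_2)>0$, $p_3'(y_3)<0$; that is, $y_1,y_3$ are asymptotically stable and $y_2$ is unstable. So the whole theorem reduces to exhibiting three simple roots of $p_3$ in $(0,A)$.

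To find them I would first analyse the degenerate case $K_{M1}=0$, where $p_3(y)=y\,q(y)$ with $q(y)=-(\alpha-1)y^2+A(\alpha-2)y+(A^2-Bk_1^{-1}k_4)$. The hypothesis $A^2k_1<Bk_4$ gives $q(0)=A^2-Bk_1^{-1}k_4<0$, and a direct evaluation gives $q(A)=-Bk_1^{-1}k_4<0$; the assumption $\alpha>2$ places the vertex $y_v=\frac{A(\alpha-2)}{2(\alpha-1)}$ inside $(0,A)$. The discriminant of $q$ equals $A^2(\alpha-2)^2+4(\alpha-1)(A^2-Bk_1^{-1}k_4)$, whose first term grows like $\alpha^2$ while the (negative) second grows only like $\alpha$; hence for $k_8$—and so $\alpha=k_8/k_1$—large enough the discriminant is positive and $q(y_v)>0$. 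Thus at $K_{M1}=0$ the polynomial $q$ has two simple roots in $(0,A)$, and in particular I can fix a point $y_1\in(0,y_v)$ with $q(y_1)<0$ and set $y_2=y_v$ with $q(y_2)>0$, so that $p_3(y_1)<0$ and $p_3(y_2)>0$ strictly.

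Finally I would switch on $K_{M1}>0$. Since each value $p_3(y)$ is affine, hence continuous, in $K_{M1}$, the strict inequalities $p_3(y_1)<0$ and $p_3(y_2)>0$ persist for all sufficiently small $K_{M1}>0$, while $p_3(0)=K_{M1}A^2>0$ and $p_3(A)=-Bk_1^{-1}k_4A<0$ hold for every $K_{M1}$. At the four ordered points $0<y_1<y_2<A$ the signs of $p_3$ are therefore $+,-,+,-$, so the intermediate value theorem yields a root in each of the three subintervals; as $p_3$ is a cubic these are all of its roots, so they are distinct and simple. Combined with the first paragraph this produces three hyperbolic steady states, two asymptotically stable and one unstable.

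The main obstacle I expect lies in the bookkeeping of the perturbation step: one must check that the point $y_1$ and the vertex $y_2=y_v$, chosen once and for all at $K_{M1}=0$, remain correctly ordered inside $(0,A)$ and that the strict sign conditions are genuinely uniform in $K_{M1}$, so that the third crossing—the one created near the origin as the constant term $K_{M1}A^2$ lifts $p_3(0)$ above zero—really falls inside $(0,A)$ rather than escaping through an endpoint. Quantitatively the delicate input is the positivity of the discriminant for large $k_8$, which is precisely what guarantees two genuine roots of $q$ in $(0,A)$ to begin with, with $A^2k_1<Bk_4$ supplying $q(0)<0$ and $\alpha>2$ supplying a vertex in the interior.
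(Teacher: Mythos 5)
Your proposal is correct and, despite being packaged as a perturbation off the limit $K_{M1}=0$, it is essentially the paper's own argument: the paper writes the steady-state condition as $q_1(y)=q_2(y)$ with $q_1(y)=(A-y)[k_1A+(k_8-k_1)y]$ and $q_2(y)=\frac{Bk_4y}{K_{M1}+y}$, and your quadratic is exactly $q=k_1^{-1}(q_1-Bk_4)$, your vertex $y_v$ is the paper's maximum point $y_1=\frac{(\alpha-2)A}{2(\alpha-1)}$, and your discriminant-positivity condition for $k_8$ large is equivalent to the paper's inequality $q_1(y_1)=\frac{k_8\alpha A^2}{4(\alpha-1)}>Bk_4$. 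Both proofs then establish the same $+,-,+,-$ sign pattern at four test points (you via continuity in $K_{M1}$ from the factored limit, the paper via the steep rise of $q_2$ toward $Bk_4$ when $K_{M1}$ is small, with $A^2k_1<Bk_4$ playing the identical role near $y=0$) and deduce stability and hyperbolicity from simplicity and ordering of the roots in the same way.
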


\begin{proof}
When three steady states exist they must be simple zeros of $p_3$
and it follows that when ordered by the value of $y$ the first and third
steady states are stable while the second is unstable. Each of these steady
states is hyperbolic. Thus to complete the proof of the theorem it suffices to
prove the existence of three steady states under the given assumptions.
The condition for a steady state can be written in the form $q_1(y)=q_2(y)$
where $q_1(y)=(A-y)[k_1A+(-k_1+k_8)y]$ and $q_2(y)=\frac{Bk_4y}{K_{M1}+y}$. Note
that $q_2(y)<Bk_4$ for all $y\ge 0$. If $\alpha>1$ then $q_1$ has a local
maximum when $y=y_1=\frac{(\alpha-2)A}{2(\alpha-1)}$. Assume that $\alpha>2$ so
that $y_1>0$. Evaluating at the maximum gives
$q_1(y_1)=\frac{k_8\alpha A^2}{4(\alpha-1)}$. By choosing
$k_8$ large enough while keeping all other parameters fixed we can ensure that
this maximum is greater that $Bk_4$. It follows that $q_1(y_1)>q_2(y_1)$. Then
choosing $K_{M1}$ small enough while keeping all other parameters fixed
and using the fact that $A^2k_1<Bk_4$ ensures that there is some $y_2<y_1$ with
$q_1(y_2)<q_2(y_2)$. This implies that there are two roots of $p_3$ which
are less than $y_1$ and these are simple. Under these conditions $p_3$ has
three positive roots in the interval $(0,A)$ and so there exist three positive
steady states. 
\end{proof}

\noindent
This theorem and its proof are illustrated by Fig.\ \ref{fig:theorem1} where we show $q_1$, $q_2$ and $p_3$ for parameters $A = 1$, $B = 2$, $k_1 = 1$, $k_4 = 1$, $k_8 = 8$, $\alpha = \frac{k_8}{k_1}$ and $K_{M1} = 0.02$ satisfying the assumptions in Theorem \ref{th:1}.
\begin{figure}[htbp]
	\centering
	\includegraphics[width=0.7\textwidth]{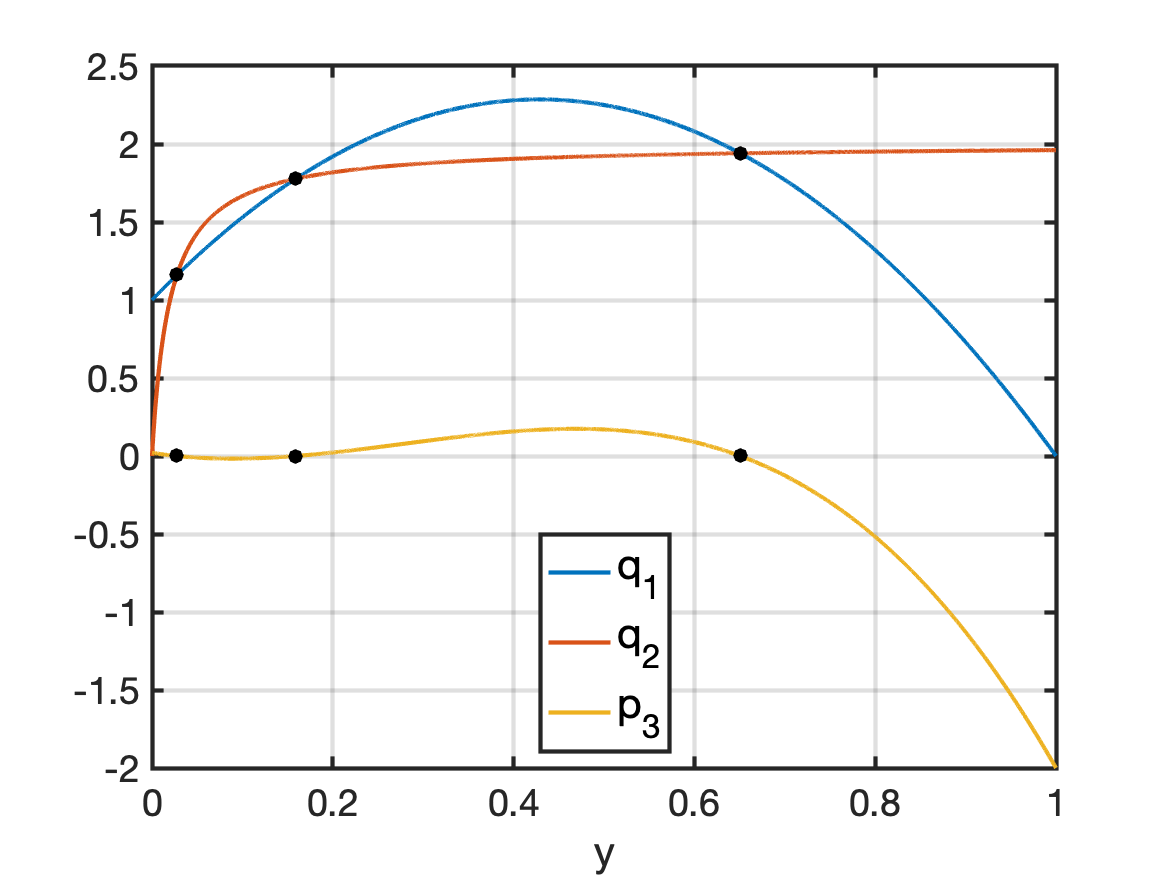}
	\caption{Illustration of Theorem \ref{th:1}\label{fig:theorem1}.}
\end{figure}

In fact the three steady states arise in a single bifurcation. To prove this
we first need a result on cubic equations.

\begin{lemma}\label{lem:1}
	The polynomial $p(x)=ax^3+bx^2+cx+d$ has a triple root if and only
	if $b^3=27a^2d$ and $c^3=27ad^2$.
\end{lemma}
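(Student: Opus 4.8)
The plan is to prove both implications by reducing everything to the statement that $p$ has a triple root if and only if $p(x) = a(x-r)^3$ for some real $r$, and then matching coefficients. Throughout I would treat $a \neq 0$, since $p$ is genuinely cubic.

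For the forward direction I would assume $p$ has a triple root $r$, so that $p(x) = a(x-r)^3$. Expanding gives $b = -3ar$, $c = 3ar^2$ and $d = -ar^3$. From here the two claimed identities are immediate computations: $b^3 = -27a^3 r^3 = 27a^2(-ar^3) = 27a^2 d$, and since $d^2 = a^2 r^6$ we get $27 a d^2 = 27 a^3 r^6 = (3ar^2)^3 = c^3$. This direction involves no subtlety beyond bookkeeping.

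The converse is where the argument has to be set up carefully. Assuming $b^3 = 27 a^2 d$ and $c^3 = 27 a d^2$, the key observation is that if a triple root exists it is forced to be $r := -b/(3a)$, because the sum of the three equal roots is $-b/a$. So I would take this $r$ as a candidate and show the two hypotheses force $p(x) = a(x-r)^3$. The first hypothesis gives $d = b^3/(27a^2) = -ar^3$ directly. Substituting this into the second hypothesis yields $c^3 = 27 a (ar^3)^2 = 27 a^3 r^6 = (3ar^2)^3$.

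The main, and essentially only, point requiring care is the final step: concluding $c = 3ar^2$ from $c^3 = (3ar^2)^3$. This uses that $t \mapsto t^3$ is a bijection of $\R$, so the real cube root is unique; over $\mathbb{C}$ the conclusion would fail, and it is precisely the reality of the coefficients that makes the two stated conditions sufficient rather than merely necessary. Once $c = 3ar^2$ is in hand, together with $b = -3ar$ (automatic from the definition of $r$) and $d = -ar^3$, I would read off $p(x) = ax^3 - 3ar x^2 + 3ar^2 x - ar^3 = a(x-r)^3$, exhibiting $r$ as a triple root and completing the proof.
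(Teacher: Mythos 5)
Your proof is correct, but it follows a genuinely different route from the paper's. The paper characterizes a triple root $x_*$ by the vanishing of $p(x_*)$, $p'(x_*)$ and $p''(x_*)$: the forward direction extracts $x_*=-b/(3a)$ from $p''(x_*)=0$ and derives the intermediate identities $b^2=3ac$ and $b^3=\tfrac{9}{2}a(bc-3ad)$, which combine to give the two stated conditions; the converse reverses this algebra (deducing $bc=9ad$, then $b^2=3ac$) and then checks directly that $-b/(3a)$ is a triple root. You instead reduce the whole lemma to the factorization $p(x)=a(x-r)^3$ and match coefficients, which makes the forward direction pure bookkeeping and isolates the single substantive step of the converse: extracting a real cube root to conclude $c=3ar^2$. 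This buys transparency in two ways. First, the paper's converse relies on exactly the same cube-root-uniqueness fact, but silently, in the unexplained step from the two hypotheses to $bc=9ad$ (one multiplies them to get $(bc)^3=(9ad)^3$); you make the role of working over $\R$ explicit, and your remark that the statement fails over $\mathbb{C}$ is correct---for instance $x^3-3x^2+3\omega x-1$, with $\omega$ a nontrivial cube root of unity, satisfies both conditions yet has no triple root. Second, your coefficient matching avoids the tacit case distinction the paper's algebra needs when passing to $b^2=3ac$ (dividing by $b$ requires handling $b=0$ separately). What the derivative-based route offers in exchange is that the candidate root and the identities emerge mechanically from $p=p'=p''=0$ rather than from positing the factored form; but since you justify the candidate $r=-b/(3a)$ by Vieta anyway, nothing essential is lost. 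Both proofs implicitly assume $a\neq 0$, which you state explicitly and which is harmless in the paper's application, where $a=-(\alpha-1)\neq 0$.
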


\begin{proof}
If $x_*$ is a triple root then $p(x_*)=p'(x_*)=p''(x_*)=0$. From
the last of these equations we can conclude that $x_*=-\frac{b}{3a}$.
Substituting this in the other two equations gives $b^2=3ac$ and
$b^3=\frac92a(bc-3ad)$. Combining the last two equations gives $b^3=27a^2d$
and $c^3=27ad^2$. Suppose conversely that $b^3=27a^2d$ and $c^3=27ad^2$. Then
$bc=9ad$ and $abc=9a^2d$. Thus $\frac92(abc-3a^2d)=27a^2d$ and it follows that
$b^3=\frac92a(bc-3ad)$. Using $b^3=27a^2d$ then implies that $b^2=3ac$. With
all this information it can be checked directly that $x_*=-\frac{b}{3a}$ is
a triple root of $p$. 
\end{proof}

\begin{theorem}\label{th:2}
The three steady states in Theorem \ref{th:1} arise in a generic cusp
bifurcation.
\end{theorem}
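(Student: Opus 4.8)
The plan is to prove that the triple coincidence of the three steady states occurs as a genuine cusp bifurcation in the sense of bifurcation theory: namely, to exhibit a point in parameter space where $p_3$ acquires a triple root, and then verify the standard non-degeneracy and transversality conditions that characterize a generic cusp. The natural strategy is to use $(K_{M1}, k_8)$ (equivalently $\alpha$) as the two unfolding parameters, since these are precisely the two quantities that were driven to extremes in Theorem~\ref{th:1} to create the three roots, while holding $A$, $B$, $k_1$, $k_4$ fixed.

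First I would locate the cusp point. By Lemma~\ref{lem:1}, $p_3$ (after normalizing by its leading coefficient $-(\alpha-1)$) has a triple root exactly when the two relations $b^3 = 27a^2 d$ and $c^3 = 27 a d^2$ hold, where $a,b,c,d$ are the coefficients of $p_3$ read off from the displayed cubic. These are two equations in the two parameters $(K_{M1}, k_8)$, so generically they cut out a discrete set of solutions; I would argue that a solution exists in the relevant region (using the intermediate-value / continuity picture already established, where three roots are present for small $K_{M1}$ and they must collapse as $K_{M1}$ increases) and that at such a solution $\alpha > 2$ and the triple root lies in $(0,A)$. Equivalently, and perhaps more cleanly, I would parametrize directly: a triple root at $y = x_*$ means $p_3(y) = -(\alpha-1)(y - x_*)^3$, and matching the four coefficients gives four equations which, together with the two free parameters plus $x_*$, can be solved. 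This pins down the candidate cusp point $(K_{M1}^*, k_8^*)$ and the location $x_*$ of the coalesced steady state.

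Next I would verify the cusp normal-form conditions. The cleanest route is to appeal to the standard criterion (as in Kuznetsov) that a cusp bifurcation in a one-parameter family of scalar vector fields, unfolded by two parameters, is generic provided: (i) at the cusp point the right-hand side $F$ and its first two derivatives in $y$ vanish while the third derivative $F_{yyy}$ is nonzero, and (ii) the map from the two unfolding parameters to the coefficients of the quadratic and linear terms of the local normal form is a local diffeomorphism (the transversality/regularity condition). Condition (i) is exactly that $x_*$ is a simple triple root, so $F_{yyy}(x_*) \neq 0$ follows from the leading coefficient $-(\alpha-1) \neq 0$ at $\alpha = \alpha^* > 2$. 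For condition (ii) I would compute the Jacobian of the two reduced unfolding coefficients with respect to $(K_{M1}, k_8)$ and show it is nonsingular at the cusp point.

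The hard part will be condition (ii): verifying the non-degeneracy of this parameter-to-coefficient map. Concretely, after reducing the cubic to its depressed form $\eta^3 + \beta_1(K_{M1},k_8)\eta + \beta_0(K_{M1},k_8)$ near the triple root, one must show that $(K_{M1}, k_8) \mapsto (\beta_0, \beta_1)$ has invertible Jacobian at the cusp. This is a determinant computation in the rational expressions for the coefficients $a,b,c,d$, and the algebra is where the real work lies; I expect it to be nonzero for generic parameter values but the bookkeeping — especially keeping the dependence on $K_{M1}$ (which enters several coefficients) straight — is the main obstacle. I would organize this by evaluating everything at the explicit cusp point found in the first step, reducing the transversality check to a single nonvanishing numerical determinant, which can then be confirmed directly (and cross-checked against the illustrative parameters already used for Theorem~\ref{th:1}).
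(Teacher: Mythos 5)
Your overall framework (locate a triple root of $p_3$, then verify Kuznetsov's nondegeneracy and transversality conditions) is the right one, but both of its steps have genuine gaps, and both gaps trace back to the same decision: freezing $k_4$ and working with only the two unfolding parameters $(K_{M1},k_8)$. First, existence of the cusp point. Your intermediate-value sketch --- three roots exist for small $K_{M1}$ and ``must collapse'' as $K_{M1}$ increases --- only yields a double root: when three roots become one under variation of a single parameter, generically two of them merge at a fold and move off into the complex plane, and a triple root never occurs. A triple root is a codimension-two phenomenon, so you must genuinely solve the coupled system $b^3=27a^2d$, $c^3=27ad^2$ in the two unknowns $(K_{M1},k_8)$, which you never do; it is also not clear that this system has a solution with $\alpha>2$ and $K_{M1}>0$ for an \emph{arbitrary} fixed admissible value of $k_4$. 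The paper's proof avoids the coupling entirely by keeping $k_4$ free: $k_4$ enters only the coefficient $c$, so one first solves $b^3=27a^2d$ (which involves only $\alpha$, $K_{M1}$, $A$) by an intermediate-value argument in $\alpha$ (cubic versus quadratic growth in $\alpha$), and then adjusts $c$ to any required negative value by increasing $k_4$, which leaves $a$, $b$, $d$ --- and hence the first condition --- untouched.

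Second, transversality. You concede that the nonvanishing of the Jacobian of $(K_{M1},k_8)\mapsto(\beta_0,\beta_1)$ is ``the main obstacle'' and propose to settle it by a numerical determinant evaluation ``at the explicit cusp point''; but your first step produces no explicit cusp point (it is an IVT/dimension-count argument), so there is nothing concrete to evaluate at, and an approximate numerical check at an approximately known point would not be a proof in any case. The paper sidesteps this computation altogether: one verifies symbolically that the derivative of the \emph{four}-parameter map $(\alpha,K_{M1},A,k_4)\mapsto(a,b,c,d)$ is invertible whenever $\alpha>2$ (the matrix is essentially triangular, since $a$ depends only on $\alpha$, $d$ only on $K_{M1}$ and $A$, and $k_4$ appears only in $c$). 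By the inverse function theorem the coefficients of $p_3$ can then be prescribed arbitrarily near the triple-root values, so one can choose a two-parameter slice $(\beta_1,\beta_2)\mapsto(\alpha,K_{M1},A,k_4)$ realizing the standard unfolding $(a,b,c,d)=(1,0,\beta_2,\beta_1)$ exactly; genericity in Kuznetsov's sense is then automatic, with no determinant to evaluate at an implicitly defined point. If you insist on your two-parameter setup, you would need both to solve the coupled system explicitly and to carry out the Jacobian computation rigorously --- precisely the work the paper's choice of free parameters is designed to avoid.
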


\begin{proof}
To prove this it will be shown that the parameters can be chosen
so that the polynomial $p_3$ satisfies the conditions of Lemma \ref{lem:1}. Assume that
$K_{M1}<A$. Since $b^3=[-K_{M1}(\alpha-1)+A(\alpha-2)]^3$ we see that
$b^3=(A-K_{M1})^3\alpha^3+\ldots$ for $\alpha$ large and
$b^3=(K_{M1}-2A)^3+\ldots$ for $\alpha\to 0$. Now $A-K_{M1}>0$ and  $K_{M1}-2A<0$.
Thus if we consider $b^3$ as a function of $\alpha$ with the other parameters
fixed it is an increasing function which takes on all values in the interval
$[(K_{M1}-2A)^3,\infty)$. On the other hand $a^2d=(\alpha-1)^2K_{M1}A^2$ and so
$a^2d=K_{M1}A^2\alpha^2+\ldots$ for $\alpha$ large and $a^2d=K_{M1}A^2+\ldots$ for
$\alpha\to 0$. It follows that there exists an $\alpha_*$ for which 
$b^3=27a^2d$. In this way the first condition of Lemma \ref{lem:1} has been achieved.
Since $a^2d$ is non-negative there the same must be true of $b$ and it follows
that $\alpha_*>2$. Hence $ad^2$ is negative and so in order to achieve the
second condition of the Lemma \ref{lem:1} it is enough to show that $c$ can
be given any prescribed negative value by choosing $k_4$ appropriately while
fixing the other parameters. Note that $a$, $b$ and $d$ do not depend on $k_4$
so that the first condition remains satisfied. Since $\alpha_*>2$ the
quantity $c$ is positive for $\alpha=\alpha_*$ and $k_4$ sufficiently small.
By increasing $k_4$ it can then be made to have any desired negative value.
Thus it can be ensured that the second condition is satisfied. Note that
the point $x_*$ at which the bifurcation takes place does lie in the
biologically relevant region $(0,A)$ since there is one steady state in that
region and $x_*$ is, neglecting multiplicity, the only one.

Next we note that the derivative of the mapping
$(\alpha,K_{M1},A,k_4)\mapsto (a,b,c,d)$ is always invertible for $\alpha>2$.
Thus by the inverse function theorem we see that by varying the parameters
arbitrarily we can vary the coefficients of the polynomial $p_3$ arbitrarily in
a neighbourhood of the values for the triple root. Thus we can choose two
parameters so that the point with the triple root is embedded in a generic
cusp bifurcation as defined in \cite{kuznetsov10}. More specifically we can 
choose a mapping $(\beta_1,\beta_2)\mapsto (\alpha,K_{M1},A,k_4)$ such that, 
after translating the coordinate $y$ so that the bifurcation is at the origin,
we have $(a,b,c,d)=(1,0,\beta_2,\beta_1)$. 
\end{proof}

Consider now the rescaled mass action system (\ref{basicr1})-(\ref{basicr6})
in the case $C=0$. In this case we can discard the equations for $d$ and $e$.
Moreover, we can use the conservation laws to discard the equations for $x$
and $c$ and replace these quantities in the right hand sides of the equations
for $y$ and $f$. The result is
\begin{eqnarray}
&&y'=k_1(A-B-y+f)^2-k_2fy+k_3(B-f)+k_8xy,\label{2dsystem1}\\
&&\epsilon f'=-k_2fy+(k_3+k_4)(B-f)\label{2dsystem2}.
\end{eqnarray}
We now want study the limit $\epsilon\to 0$ in these equations and show that
solutions converge.

\begin{theorem}\label{th:3}
There is a choice of parameters such
that the system (\ref{basic1})-(\ref{basic6}) with $d=e=0$, $C=0$ and 
fixed values of $A$ and $B$ imposed has three steady states, of which two are
asymptotically stable and the other a hyperbolic saddle. The
three steady states arise in a generic cusp bifurcation. For arbitrary values
of the parameters each solution converges to a steady state as $t\to\infty$.
In particular, this system has no periodic solutions.
\end{theorem}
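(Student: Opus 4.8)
The plan is to pass to the reduced planar picture first. Since $C=0$ forces $d=e=0$, equations (\ref{basic1})--(\ref{basic6}) reduce, after eliminating $x$ and $c$ via the conservation laws $A=x+c+y$ and $B=c+f$, to an autonomous planar system, which in the variables $(x,y)$ (with $c=A-x-y$, $f=B-A+x+y$) reads
\[
\dot x=-k_1x^2-k_8xy+k_4(A-x-y),\qquad
\dot y=k_1x^2+k_8xy-k_2(B-A+x+y)y+k_3(A-x-y).
\]
The physical phase space is the compact convex region $R=\{x\ge0,\ y\ge0,\ A-B\le x+y\le A\}$. The statement then splits into a \emph{local} assertion (three hyperbolic steady states of the stated types and a cusp, for one special parameter choice) and a \emph{global} assertion (convergence of every orbit, for all parameters), and I would treat these by quite different tools.

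For the local part I would lift the bifurcation already established in Theorem \ref{th:2} rather than reanalyse it. In the Michaelis--Menten regime the complex is a fast variable: the critical manifold $\{\dot c=0\}$ is the graph $c=\frac{By}{K_{M1}+y}$, it is normally hyperbolic and attracting because $\partial\dot c/\partial c=-(k_2y+k_3+k_4)<0$, and the induced slow flow on it is precisely the scalar equation (\ref{scalar2}). Fenichel's theorem then yields an attracting invariant slow manifold $M_\epsilon$ carrying a flow that is an $O(\epsilon)$ perturbation of (\ref{scalar2}); since the cusp produced by Theorem \ref{th:2} is generic, and genericity and transversality are open conditions, the two-parameter cusp unfolding survives on $M_\epsilon$ at nearby parameter values and supplies three steady states. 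Their stability follows from the two eigenvalues: the tangential one coincides with that of the scalar flow while the transverse (fast) one is strongly negative. Hence the two outer states (stable for (\ref{scalar2})) become stable nodes, and the middle state (unstable for (\ref{scalar2})) inherits one positive and one negative eigenvalue and is therefore a hyperbolic saddle. Undoing the scaling produces the required parameters for (\ref{basic1})--(\ref{basic6}).

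For the global part I would work directly in the plane. First $R$ is positively invariant: on $x=0$ one has $\dot x=k_4(A-y)\ge0$, on $y=0$ one has $\dot y=k_1x^2+k_3c\ge0$, on $c=0$ one has $\dot c=k_2fy\ge0$, and on $f=0$ one has $\dot f=(k_3+k_4)B>0$, so the field never leaves $R$ and every forward orbit is bounded with nonempty $\omega$-limit set in $R$. The crux is excluding periodic orbits for \emph{all} parameters, which is awkward because the plain divergence $(k_8-2k_1)x-k_8y-k_2(f+y)-(k_3+k_4)$ changes sign exactly when phosphorylation is activating ($k_8>2k_1$), i.e.\ in the regime where Theorem \ref{th:1} gives bistability. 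I would resolve this with the Dulac weight $\varphi=1/(xy)$, for which a direct computation gives
\[
\partial_x(\varphi\dot x)+\partial_y(\varphi\dot y)
=\frac{1}{x^2y^2}\Big(-k_1x^2(x+y)-k_2xy^2+k_3x(x-A)+k_4y(y-A)\Big),
\]
and on $R$ every bracketed term is nonpositive, since $x,y\ge0$ while $x+y\le A$ gives $x-A\le0$ and $y-A\le0$. Thus the weighted divergence is strictly negative on the interior of $R$, and the Bendixson--Dulac criterion excludes not only periodic orbits but any polycycle bounding a subregion. Poincar\'e--Bendixson then forces every $\omega$-limit set to be a single steady state, which is the asserted convergence; in particular there are no periodic solutions.

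The step I expect to be the main obstacle is this global exclusion of oscillations, precisely because the unweighted divergence is indefinite in the activating regime; the whole argument rests on exhibiting a Dulac function, and the non-obvious point is that the standard mass-action weight $1/(xy)$ succeeds because the positive contributions $k_3x^2+k_4y^2$ are absorbed only after invoking the conservation bound $x+y\le A$. A secondary technical point is making the cusp inheritance rigorous: one must check that Fenichel's slow manifold is smooth in $\epsilon$ up to $\epsilon=0$ and that the generic-cusp normal form of Theorem \ref{th:2} is stable under the resulting $O(\epsilon)$ perturbation, so that the unfolding, and with it the saddle at the middle equilibrium, persists.
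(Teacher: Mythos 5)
Your proposal is correct, and its local half is essentially the paper's own argument: the paper likewise obtains the three steady states, their stability types, and the cusp by applying geometric singular perturbation theory to lift Theorems \ref{th:1} and \ref{th:2}, using exactly your normal-hyperbolicity check (the transverse derivative $-(k_2y+k_3+k_4)<0$ on the critical manifold), so that stable states of (\ref{scalar2}) lift to stable states and the middle one to a hyperbolic saddle. Where you genuinely diverge is the global convergence claim. The paper works with the planar reduction (\ref{2dsystem1})--(\ref{2dsystem2}) in the variables $(y,f)$, asserts that the off-diagonal entries of its Jacobian are negative on the biologically relevant region, i.e.\ that the system is competitive, and then invokes the theorem of \cite{smith95} that every solution of a planar competitive system converges. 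You instead exclude periodic orbits and polycycles with the Dulac weight $1/(xy)$ in the $(x,y)$ chart and finish with Poincar\'e--Bendixson. Your route is not merely an alternative; it is arguably safer. The Jacobian printed in the paper omits the $k_8$ contributions and contains sign slips: the correct off-diagonal entry is $\partial \dot y/\partial f = 2k_1x+(k_8-k_2)y-k_3$, which is positive near the corner $y\approx 0$, $x\approx A$ whenever $2k_1A>k_3$, so competitivity in those coordinates is not evident for arbitrary parameters --- precisely the generality the theorem asserts. Your Dulac computation (which checks out: the weighted divergence equals $\frac{1}{x^2y^2}\bigl[-k_1x^2(x+y)-k_2xy^2+k_3x(x-A)+k_4y(y-A)\bigr]$, strictly negative on the interior of $R$) yields convergence for all parameters with no case distinction, and in particular covers the activating regime $k_8>2k_1$ where, as you note, the unweighted divergence changes sign. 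Two small points you should make explicit to be complete: (i) the steady states are finitely many (roots of a nonzero polynomial) and all lie in the interior of $R$, since on each boundary face one of the rates is strictly positive, which is what the Poincar\'e--Bendixson trichotomy and the polycycle argument need; and (ii) no periodic orbit or polycycle can touch $\partial R$, because the field is strictly inward there apart from isolated tangency points, so these limit sets stay in the open region where $\varphi$ is smooth and Green's theorem applies. Both are routine, and with them your proof is complete; what the paper's approach buys, when it applies, is brevity, while yours buys uniformity in the parameters and independence from the competitivity claim.
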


\begin{proof}
It suffices to prove corresponding results for the system
(\ref{2dsystem1})-(\ref{2dsystem2}). The theorem can be proved using the
results of Theorems \ref{th:1} and \ref{th:2} and geometric singular perturbation theory
(GSPT) \cite{kuehn15}. The important condition to be checked is that of normal
hyperbolicity. It says that on the critical manifold, which is the zero set of
the right hand side in the equation for $f'$, the derivative of that right
hand side with respect to $f$ should be non-zero. This is indeed the case
since the derivative is $-k_2f-k_3-k_4<0$. It can be concluded that for each
hyperbolic steady state of the Michaelis-Menten system there is a nearby
steady state of the mass action system which is hyperbolic within the
invariant manifold of constant $A$ and $B$. In addition, when the steady state
of the Michaelis-Menten system is stable the same is true of the corresponding
steady state of the mass action system and when the steady state of the
Michaelis-Menten system is unstable the steady state of the mass action system
is a saddle point whose stable manifold is one-dimensional. To obtain the
statement about the convergence of general solutions to steady states we
compute the linearization of (\ref{2dsystem1})-(\ref{2dsystem2}) which is
\begin{equation}
A=\left[
{\begin{array}{cc}
2k_1(f+y-A-B)-k_2f & 2k_1(f+y-A-B)-k_2y-k_3\\
-\epsilon^{-1}k_2 f& -\epsilon^{-1}[k_2y+(k_2+k_3)]
\end{array}}
\right]
\end{equation}
It is always the case that $A-y$ and $B-f$ are positive on the region of
biological interest. Thus the system is competitive. Every solution of a
competitive two dimensional system converges to a steady state \cite{smith95}
and this completes the proof of the theorem. 
\end{proof}

To conclude this section we consider the limiting case of the system
(\ref{scalar2}) obtained by setting $k_1=0$. In this case only the
phosphorylated form of the protein is catalytically active. Bistability for a
system of this type was considered in \cite{lisman85}. If we continue to
assume $C=0$ then $y=0$ is a steady state. Thus in order to get
bistability we need to include that boundary steady state in the counting.
With this understanding we obtain an analogue of Theorem \ref{th:1} for this case,
where the condition on $\alpha$ is absent. The proof is strictly analogous to
that of Theorem \ref{th:1}. To see what happens to Theorem \ref{th:2} in this case we need to
replace $p_3$, which was got by division by $k_1$, by
$\tilde p_3=y[k_8(-y+A)(y+K_{M1})-Bk_4]$. This polynomial has a triple root at
the origin when $A=K_{M1}$ and $AK_{M1}=Bk_4$.

\section{Effect of an external kinase}\label{sec:4}

We next consider the case where the phosphorylated kinase is completely
inactive, which can be modelled by setting $k_8=0$ in the model of the last 
section. This might be thought of as a model of the mutant of Lck where the
activatory site Y394 is knocked out. It should, however, be noted that
in reality the catalytic activity of this mutant, although much reduced, is
not actually zero \cite{smith93}. In that case we have $k_1-k_8>0$ and, as
mentioned above, there is only one positive steady state in the
Michaelis-Menten system. Next we will investigate the case where $k_1-k_8>0$ but
an external kinase is present ($C>0$). It turns out that there is still only
one steady state in the Michaelis-Menten system. For in any such steady state
we have
\begin{equation}
k_1x^2+k_8x(A-x)+\frac{Ck_7x}{K_{M2}+x}=\frac{Bk_4(A-x)}{K_{M1}+A-x}.
\end{equation}
Since the function on the left hand side of this equation is monotone
increasing on $[0,A]$ and is zero for $x=0$ while the function on the right
hand side is monotone decreasing on $[0,A]$ and is zero for $x=A$ these two
functions are equal at a unique point $x\in (0,A)$. Thus there cannot be more
than one steady state in the Michaelis-Menten system with $k_8<k_1$. It turns
out, however, that there can be more than one steady state in the corresponding
mass action system, even in the case $k_8=0$.

Positive solutions of the mass action system with $k_8=0$ are in one to one
correspondence with solutions of the following system obtained by using the
conserved quantities to eliminate $d$, $c$ and $y$.
\begin{eqnarray}
&&\dot x=-k_1x^2+k_4(B-f)-k_5ex+k_6(C-e),\label{basicelim1}\\
&&\dot e=-k_5ex+(k_6+k_7)(C-e),\label{basicelim2}\\
&&\dot f=-k_2f(A-B-C-x+e+f)+(k_3+k_4)(B-f).\label{basicelim3}
\end{eqnarray}

Define a polynomial by $p_6(x)=\sum_{i=0}^6a_ix^i$ with coefficients 
\begin{eqnarray}
a_6&=&k_1^2k_2k_5^2,\\
a_5&=&2k_1^2k_2k_5(k_6+k_7)+k_1k_2k_4k_5^2,\\
a_4&=&k_1k_2[-k_5^2((A+B)k_4-(k_4+2k_7)C-k_4(k_6+k_7))\nonumber\\
&&+k_1(k_6+k_7)^2+k_1k_5(k_6+k_7)]-k_1k_4(k_3+k_4)k_5^2,\\
a_3&=&k_1k_5(k_6+k_7)\{k_2[-2(A+B)k_4+(k_4+2k_7)C]\nonumber\\
&&-2(k_3+k_4)k_4\}+k_2k_4[k_5^2(Bk_4-k_7C)+k_1(k_6+k_7)^2],\\
a_2&=&k_2[-k_1k_4(A+B)(k_6+k_7)^2-k_4^2k_5(k_6+k_7)B\nonumber\\
&&+k_5^2(k_4A-(k_4+k_7)C-k_4(k_6+k_7))(Bk_4-k_7C)]\nonumber\\
&&-(k_3+k_4)k_4[k_1(k_6+k_7)^2+k_5^2k_7C],\\
  a_1&=&k_2k_4k_5(k_6+k_7)[B(Ak_4-(k_4+k_7)C-k_4(k_6+k_7))\nonumber\\
&&+A(k_4B-k_7C)]-(k_3+k_4)k_4k_5(k_6+k_7)k_7C,\\
a_0&=&k_2k_4^2(k_6+k_7)^2AB.      
\end{eqnarray}
Define $x_{\rm max}$ to be the largest value of $x$ satisfying the inequalities
\begin{eqnarray}
&&\frac{k_1}{k_4}x^2+\frac{k_5k_7Cx}{k_4(k_5x+k_6+k_7)}\le B,\label{xmax1}\\
&&x+\frac{k_1}{k_4}x^2+\left(1+\frac{k_7}{k_4}\right)
\frac{k_5Cx}{(k_5x+k_6+k_7)}\le A.\label{xmax2}
\end{eqnarray}
Note that $x_{\rm max}$ depends continuously on the parameters.

\begin{lemma}\label{lem:2}
For given positive values of $A$, $B$ and $C$ positive steady
state solutions of the system (\ref{basic1})-(\ref{basic6}) with $k_8=0$ are
in one to one correspondence with roots of the polynomial $p_6$ in the
interval $(0,x_{\rm max})$.
\end{lemma}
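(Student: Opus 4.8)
The plan is to reduce the steady-state problem to a single scalar equation in $x$ by solving the steady-state versions of \eqref{basicelim1}--\eqref{basicelim3} hierarchically, and then to match the resulting polynomial with $p_6$. Setting the right-hand side of \eqref{basicelim2} to zero gives $e(k_5x+k_6+k_7)=(k_6+k_7)C$, so that, writing $D=k_5x+k_6+k_7>0$, one obtains $e=\frac{(k_6+k_7)C}{D}$ and $d=C-e=\frac{k_5Cx}{D}$ as rational functions of $x$. The key simplification is that, using $k_5ex=(k_6+k_7)d$ from this equation, the kinase terms in the steady-state version of \eqref{basicelim1} collapse, since $-k_5ex+k_6(C-e)=-k_7d$, leaving $k_4(B-f)=k_1x^2+k_7d$. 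Hence $c=B-f=\frac{k_1}{k_4}x^2+\frac{k_5k_7Cx}{k_4D}$, and thus $f$ itself, is determined by $x$, and finally $y=A-B-C-x+e+f$ is as well.

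With all variables expressed through $x$, the only remaining condition is the steady-state version of \eqref{basicelim3}. Using $A-B-C-x+e+f=y$ this reads $k_2fy=(k_3+k_4)(B-f)$, a rational equation whose denominators are powers of $D$. I would clear denominators by multiplying through by $k_4^2D^2$. Both $f$ and $y$ have the form of a cubic in $x$ divided by $k_4D$, so the left-hand side contributes a degree-six polynomial while the right-hand side contributes one of degree four; the result is a polynomial of degree six in $x$, which I claim is exactly $p_6$. Because $D>0$ and $k_4>0$ for all $x\ge 0$, clearing the denominators introduces no spurious roots on the relevant range, so for $x\ge 0$ the vanishing of $p_6$ is equivalent to the original steady-state condition.

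It remains to pin down the admissible range of $x$ through positivity. For $x>0$ the quantities $e$, $d$ and $c=\frac{k_1}{k_4}x^2+\frac{k_5k_7Cx}{k_4D}$ are automatically positive, so the only constraints come from $f>0$ and $y>0$. The first is equivalent to $c<B$, which is precisely the strict form of \eqref{xmax1}, and the second to the strict form of \eqref{xmax2}. Since the left-hand sides of both inequalities are monotone increasing in $x$ and vanish at $x=0$, each inequality holds on an interval of the form $[0,x_i]$, so both hold simultaneously exactly on $(0,x_{\rm max})$. This yields the bijection: every positive steady state determines a value $x\in(0,x_{\rm max})$ with $p_6(x)=0$, and conversely every root of $p_6$ in $(0,x_{\rm max})$ reconstructs, via the explicit formulas above, a unique steady state with all six concentrations positive. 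Injectivity is immediate because $e$, $d$, $f$, $c$ and $y$ are all functions of $x$.

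The main obstacle is purely computational: verifying that clearing denominators in $k_2fy=(k_3+k_4)(B-f)$ reproduces the stated coefficients $a_0,\dots,a_6$ of $p_6$. This amounts to expanding the product of the two cubic numerators of $f$ and $y$ and collecting powers of $x$, which is lengthy but routine and introduces no conceptual difficulty. Useful consistency checks are the leading coefficient, where only the top-degree terms $-k_1k_5x^3$ of each cubic survive to give $a_6=k_1^2k_2k_5^2$, and the constant term, where setting $x=0$ leaves $k_2\cdot k_4B(k_6+k_7)\cdot k_4A(k_6+k_7)=a_0$.
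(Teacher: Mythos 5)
Your proposal is correct and takes essentially the same route as the paper's proof: solve the steady-state equations hierarchically to get $e=\frac{(k_6+k_7)C}{k_5x+k_6+k_7}$ and $f=B-(k_1/k_4)x^2-\frac{k_5k_7Cx}{k_4(k_5x+k_6+k_7)}$, recover $y$ from the conservation law, reduce the remaining equation $k_2fy=(k_3+k_4)(B-f)$ to $p_6(x)=0$, and use positivity of $f$ and $y$ to identify the admissible interval $(0,x_{\rm max})$. Your explicit monotonicity argument for why the two constraints cut out exactly $(0,x_{\rm max})$, and your checks of the leading and constant coefficients, are details the paper leaves implicit, but they do not change the structure of the argument.
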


\begin{proof}
Note first that the equations for steady states of
(\ref{basic1})-(\ref{basic6}) are equivalent to the equations for steady
states of (\ref{basicelim1})-(\ref{basicelim3}) and that these in turn
are equivalent to the equations
\begin{eqnarray}
&&k_5ex=(k_6+k_7)(C-e),\label{sfca1}\\
&&k_2f(A-B-C-x+e+f)=(k_3+k_4)(B-f),\label{sfca2}\\
&&k_1x^2=k_4(B-f)-k_7(C-e).\label{sfca3}
\end{eqnarray}

Now suppose that $(x,d,e,c,f,y)$ is a positive steady state. It follows from
(\ref{sfca1}) that $e=\frac{(k_6+k_7)C}{k_5x+k_6+k_7}$ and combining this with
(\ref{sfca3}) gives $f=B-(k_1/k_4)x^2-\frac{k_5k_7Cx}{k_4(k_5x+k_6+k_7)}$. Thus
we have solved for $e$ and $f$ in terms of $x$. Substituting this information
into (\ref{sfca2}) and rearranging gives the equation $p_6(x)=0$. Suppose
conversely that $x$ is a root of $p_6$ with $0<x<x_{\rm max}$. Define
\begin{eqnarray}
&&e=\frac{(k_6+k_7)C}{k_5x+k_6+k_7},\label{backsolve1}\\
&&f=B-(k_1/k_4)x^2-(k_7/k_4)(C-e),\label{backsolve2}\\
&&y=A-B-C-x+e+f.\label{backsolve3}
\end{eqnarray}
It follows from (\ref{xmax1}) that the quantity $f$ defined by
(\ref{backsolve2}) is positive and from (\ref{xmax2}) that the quantity $y$
defined by (\ref{backsolve3}) is positive. It follows directly that
(\ref{sfca1}) and (\ref{sfca3}) hold. The fact that $x$ is a root of $p_6$
implies that (\ref{sfca2}) holds and hence that $(x,d,e,c,f,y)$ is a positive
solution of (\ref{basic1})-(\ref{basic6}). 
\end{proof}

Consider now the real roots of $p_6$. They depend continuously on the 
parameters and their number is constant modulo two. Since $p_6(0)>0$ a root of
$p_6$ cannot pass through zero. We claim that $x_{\rm max}$
can also never be a root of $p_6$. For if $x$ were equal to $x_{\rm max}$
while satisfying the inequalities (\ref{xmax2}) then at
least one of them would become an equality. In the first case $f$ as defined
by (\ref{backsolve2}) would be equal to zero. This contradicts equation
(\ref{sfca2}). In the second case $y$ defined by equation (\ref{backsolve3})
would be equal to zero. But then it follows from (\ref{basic4}) that $c=0$ and
from (\ref{basic6}) that $x=0$, a contradiction. It can be concluded that the
sign of $p_6(x_{\rm max})$ is independent of
the parameters. To determine what the sign is it suffices to evaluate it
for some particular values of the parameters. Choose $k_i=1$ for all $i$,
$A=5$, $B=2$ and $C=3$. When $x=1$ we see that equality holds in
(\ref{xmax1}) while the strict inequality holds in (\ref{xmax2}). Thus in this
case $x_{\rm max}=1$. Evaluating the coefficients in $p_6$ gives $a_6=1$, $a_5=5$,
$a_4=8$, $a_3=-15$, $a_2=-43$, $a_1=-18$, $a_0=40$. Hence $p_6(x_{\rm max})=-22<0$.
It follows from the intermediate value theorem that $p_6$ has a least one root
in each of the intervals $(0,x_{\rm max})$ and $(x_{\rm max},\infty)$.
The number of sign changes of the coefficients in the polynomial is
even and at most four. Thus Descartes' rule of signs implies that the number
of positive roots is zero, two or four. The case with no positive roots has
already been ruled out. Thus there are two or four and at least one of them
must be greater that $x_{\rm max}$. With the parameter values in the example
there are only two changes of sign, only two positive roots and we know that
precisely one is less than $x_{\rm max}$. By continuity the number of roots in
$(0,x_{\rm max})$ counting multiplicity is odd. It can only be one or three and
we have already seen an example of parameters where it is one. In that case
the system (\ref{basic1})-(\ref{basic6}) admits precisely one positive steady
state.

We will show that there also exist parameter values such that the
system has three positive steady states. One approach would be to show that
there are parameters for which there is a triple root in the desired interval
and then perturb. In fact we will show directly that there are parameters for
which there are three roots in that interval, since that approach is simpler.

\begin{theorem}\label{th:4}
There is a choice of parameters for which the system
(\ref{basic1})-(\ref{basic6}) with $k_8=0$ has three positive steady states.
\end{theorem}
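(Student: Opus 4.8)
The plan is to reduce the problem, via Lemma \ref{lem:2}, to producing parameter values for which the polynomial $p_6$ has three roots in the interval $(0,x_{\rm max})$. The preceding discussion already supplies two universal facts I can lean on: $p_6(0)=a_0=k_2k_4^2(k_6+k_7)^2AB>0$ for all admissible parameter values, and the sign of $p_6(x_{\rm max})$ is parameter-independent and negative. Consequently, to exhibit three roots in $(0,x_{\rm max})$ it is enough to find one parameter point together with two interior test values $0<t_1<t_2<x_{\rm max}$ satisfying $p_6(t_1)<0$ and $p_6(t_2)>0$. Indeed, the sign sequence $p_6(0)>0$, $p_6(t_1)<0$, $p_6(t_2)>0$, $p_6(x_{\rm max})<0$ then forces, by three applications of the intermediate value theorem, a root in each of $(0,t_1)$, $(t_1,t_2)$ and $(t_2,x_{\rm max})$. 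These three roots lie in disjoint subintervals of $(0,x_{\rm max})$, hence are distinct, and by Lemma \ref{lem:2} they correspond to three positive steady states. Note that this certification needs only the four sign evaluations and does not require counting the total number of positive roots.

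It remains to produce such a parameter point explicitly; following the remark preceding the theorem I would construct it directly rather than through a triple root. Structurally, four positive roots require four sign changes in the coefficient sequence $(a_6,a_5,a_4,a_3,a_2,a_1,a_0)$, and since $a_6$, $a_5$ and $a_0$ are positive for all parameter values the available freedom lies in driving $a_4,a_3,a_2,a_1$ into a pattern such as $+,-,+,-$, so that the full sign pattern reads $+,+,-,+,-,+,+$. (By contrast, the sample parameters $k_i=1$, $A=5$, $B=2$, $C=3$ give the pattern $+,+,+,-,-,-,+$, with only two sign changes and a single root below $x_{\rm max}$.) I would therefore search numerically for rates realizing a four-sign-change pattern while simultaneously keeping three of the resulting roots below the threshold $x_{\rm max}$ computed from (\ref{xmax1})--(\ref{xmax2}). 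Once a promising candidate is located, the proof is finished rigorously: evaluate $p_6$ at two rational points $t_1<t_2$ by exact arithmetic to confirm $p_6(t_1)<0<p_6(t_2)$, and verify $t_1,t_2<x_{\rm max}$ directly by checking that $t_1$ and $t_2$ satisfy the strict forms of (\ref{xmax1}) and (\ref{xmax2}).

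The main obstacle I anticipate is the search for the parameter point itself. The coefficients $a_1,\dots,a_4$ are intricate polynomial expressions in the ten rate constants together with $A$, $B$, $C$, and $x_{\rm max}$ enters only implicitly through (\ref{xmax1})--(\ref{xmax2}), so there is no closed-form handle on the location of the roots relative to the threshold. The delicate point is not merely creating four positive roots but confining three of them to the comparatively small window $(0,x_{\rm max})$ while the fourth escapes above it. The natural place to look is a regime in which enzyme sequestration is strong, so that the mass action system departs sharply from its Michaelis--Menten reduction, which was shown earlier to have a unique steady state; it is precisely this finite-enzyme effect that can generate the extra pair of roots absent from the limit. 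Since the final verification reduces to a handful of exact evaluations of $p_6$ and of the two inequalities at rational arguments, I would treat the numerical exploration purely as a device for locating the example and present only the certified rational data in the written proof.
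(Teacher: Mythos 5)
Your reduction and certification scheme are sound: by Lemma \ref{lem:2} it suffices to place three roots of $p_6$ in $(0,x_{\rm max})$, and since $p_6(0)=a_0>0$ and $p_6(x_{\rm max})<0$ for all parameters (both facts established in the discussion preceding the theorem), exhibiting interior points $t_1<t_2$ with $p_6(t_1)<0<p_6(t_2)$ would indeed force three distinct roots by the intermediate value theorem. This is, if anything, slightly cleaner than the paper's logic, which certifies only two interior roots and then invokes the parity/Descartes discussion (the root count in $(0,x_{\rm max})$ is odd and at most three) to get the third. Your instinct about where to look --- a regime of strong enzyme binding, far from the Michaelis--Menten limit --- also matches what the paper actually does.

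However, there is a genuine gap: the theorem is an existence statement about parameters, and you never produce them. Your proposal explicitly defers the central step to an unperformed numerical search (``I would therefore search numerically\dots'', ``Once a promising candidate is located\dots''), so what you have is a verification protocol for a candidate you do not possess, not a proof. The paper closes exactly this gap with an asymptotic device that removes the need for any search or exact rational arithmetic: take $A=6$, $B=20$, $C=2$, $k_i=1$ for $i\ne 5$, and let $k_5\to\infty$. Then $k_5^{-2}x^{-2}p_6(x)=q(x)+o(1)$ with the explicit quartic $q(x)=x^4+x^3-20x^2+18x-4$, while the constraints (\ref{xmax1})--(\ref{xmax2}) degenerate to $x^2\le 18$ and $x+x^2\le 2$, so the admissible interval tends to $(0,1)$. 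The hand checks $q(0)<0$, $q(1/2)>0$, $q(1)<0$ give two roots of $q$ in $(0,1)$; by continuity $p_6$ has two roots in $(0,x_{\rm max})$ for $k_5$ large, and the parity argument supplies the third. To repair your proof you would either need to carry out and certify the search you describe, or adopt some such limiting construction in which the sign pattern becomes verifiable by hand.
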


\begin{proof}
Due to Lemma \ref{lem:2} it suffices to find parameter values for which the
polynomial has three roots in the interval $(0,x_{\rm max})$. It follows from
the preceding discussion that it is enough to show that the interval contains
at least two roots. It turns out that is suffices to choose $A=6$, $B=20$,
$C=2$, $k_i=1$ for all $i\ne 5$ and $k_5$ sufficiently large. With these
choices it follows that we get the following asymptotics for
$k_5\to\infty$. $a_6=k_5^2+\ldots$, $a_5=k_5^2+\ldots$, $a_4=-20k_5^2+\ldots$
$a_3=18k_5^2+\ldots$, $a_2=-4k_5^2+\ldots$ where the terms not written explicitly
are $o(k_5^2)$, as are the coefficients $a_1$ and $a_0$. It follows that
$k_5^{-2}x^{-2}p(x)=q(x)+o(1)$, where $q(x)=x^4+x^3-20x^2+18x-4$. In the limit
the inequalities defining the admissible interval become
$x^2<18$ and $x+x^2<2$. By continuity it suffices to show that $q$ has two
roots in the interval $(0,1)$. This is true because $q(0)<0$, $q(1/2)>0$ and
$q(1)<0$. 
\end{proof}

\section{Analysis of a model for wild-type Lck}\label{sec:5}

The results of the previous sections were related to situations in which
one of the two key regulatory phosphorylation sites in a Src family kinase
such as Lck is mutated. In the present section we move to the case where
both sites are present. The starting point for the discussion is the model
introduced in \cite{kaimachnikov09}. There four phosphorylation states of
the kinase are included in the description. The first, denoted by $S_i$, is
that where the inhibitory site is phosphorylated while the activatory site
is not. This form of the kinase shows no catalytic activity. $S$, $S_{a1}$ and
$S_{a2}$ are the forms where neither site is phosphorylated, only the
activatory site is phosphorylated and both sites are phosphorylated,
respectively. All of these are catalytically active to some extent and can
catalyse the transition $S\to S_{a1}$. The transitions $S\to S_i$ and
$S_{a1}\to S_{a2}$ are catalysed by Csk. The transitions $S_i\to S$ and
$S_{a2}\to S_{a1}$ are catalysed by one phosphatase and the transitions
$S_{a1}\to S$ and $S_{a2}\to S_i$ are catalysed by another phosphatase.
Experimental results obtained in \cite{hui14} indicate that some modifications
of these assumptions may be needed to obtain a biologically correct model. In
particular, it was found that Y505 in Lck undergoes autophosphorylation in
{\it trans}, albeit with a much lower rate than Y394. The variants of the
model of \cite{kaimachnikov09} which would be needed to take this into account
will not be considered further in the present paper - the aim here is rather
to see the variety of dynamical behaviour which this type of system can produce.

Let us introduce the following neutral notation for the quantities involved
in the model, denoting the concentrations of $S$, $S_i$, $S_{a1}$ and
$S_{a2}$ by $x_1$, $x_2$, $x_3$ and $x_4$ respectively. Then $X=x_1+x_2+x_3+x_4$
is a conserved quantity. The reaction rate for the autophosphorylation is
bilinear, the dephosphorylation of $S_{1a}$ is given by Michaelis-Menten
kinetics and the other reactions are assumed to be linear. Using the notations
of \cite{kaimachnikov09} for the reaction constants gives the system
\begin{eqnarray}
&&\dot x_1=-k_2x_1+k_1x_2+k_4\frac{x_3}{\beta+x_3}-k_3x_1(\delta x_1+x_3+x_4),
\label{kaim1}\\
&&\dot x_2=k_2x_1-k_1x_2+k_7x_4,\label{kaim2}\\
&&\dot x_3=k_3x_1(\delta x_1+x_3+x_4)-k_4\frac{x_3}{\beta+x_3}
+k_6x_4-k_5x_3,\label{kaim3}\\  
&&\dot x_4=k_5x_3-(k_6+k_7)x_4.\label{kaim4}
\end{eqnarray}

Before considering this system in the general case note that
setting $x_2$, $x_4$, $k_2$ and $k_5$ to zero reduces this system to
\begin{eqnarray}
&&\dot x_1=k_4\frac{x_3}{\beta+x_3}-k_3x_1(\delta x_1+x_3),\\
&&\dot x_3=k_3x_1(\delta x_1+x_3)-k_4\frac{x_3}{\beta+x_3}.
\end{eqnarray}
Either of the variables can be eliminated using the conserved quantity giving
an equation which is, up to a difference in notation, exactly the equation of
Doherty et al. discussed in previous sections. Only setting $k_2$ and $k_5$ to
zero in (\ref{kaim1})-(\ref{kaim4}) gives a partially decoupled system which
is the product of the system of \cite{doherty15} with a hyperbolic saddle. It
follows immediately from Theorem \ref{th:1} that for suitable values of the parameters
the system (\ref{kaim1})-(\ref{kaim4}) admits at least three positive steady
states, of which two are stable and hyperbolic and the third is a hyperbolic
saddle.

We now return to the general system (\ref{kaim1})-(\ref{kaim4}).
In \cite{kaimachnikov09} the authors find a remarkable variety of dynamic
behaviour in the system above which, after fixing a value of the conserved
quantity, is of dimension three. They remark that there is a limiting case
which gives rise to a system of dimension two which already exhibits a lot
of this dynamics. To investigate this possibility we define a new variable by
$y=x_3+x_4$ and use it to replace $x_3$. In addition we introduce rescaled
parameters satisfying $\tilde k_5=\epsilon k_5$ and $\tilde k_6=\epsilon k_6$.
Making these substitutions and discarding the tildes leads to the system
\begin{align}
\dot x_1&=-k_2x_1+k_1x_2+k_4\frac{y-x_4}{\beta+y-x_4}
-k_3x_1(\delta x_1+y),
\label{kaiml1eps}\\
\dot x_2&=k_2x_1-k_1x_2+k_7x_4,\label{kaiml2eps}\\
\dot y&=k_3x_1(\delta x_1+y)
     -k_4\frac{y-x_4}{\beta+y-x_4}-k_7x_4,
     \label{kaiml3eps}\\
\epsilon\dot x_4&=k_5(y-x_4)-(k_6+\epsilon k_7)x_4.\label{kaiml4eps}
\end{align}
This is a fast-slow system with one fast and three slow variables.
We have the conserved quantity $X=x_1+x_2+y$. In the limiting case
$\epsilon=0$ equation (\ref{kaiml4eps}) reduces to $y-x_4=\xi x_4$, where
$\xi=\frac{k_6}{k_5}$. It follows that $x_4=\frac{1}{1+\xi}y$.
Substituting this into (\ref{kaiml1eps}) and (\ref{kaiml2eps}) and using the
conserved quantity gives the following system of two equations:
\begin{align}
\dot x_1&=-k_2x_1+k_1x_2+k_4\frac{\xi (X-x_1-x_2)}{\beta(\xi+1)+\xi(X-x_1-x_2)}
     \nonumber\\
&\qquad-k_3x_1[(X-x_1-x_2)+\delta x_1],\label{kaimlr1}\\
\dot x_2&=k_2x_1-k_1x_2+k_7\frac{1}{\xi+1}(X-x_1-x_2)\label{kaimlr2}.    
\end{align}

In the terminology of GSPT this is the
restriction of the system to the critical manifold. This critical manifold
is normally hyperbolic and stable since the partial derivative of the right
hand side of (\ref{kaiml4eps}) with respect to $x_4$ evaluated at $\epsilon=0$
is negative. This allows us to transport information about stability and
bifurcations from steady states of the two-dimensional system to steady
states of the full system. Positive steady states of the full system
with a given value of $X$ are in one to one correspondence with positive
steady states of (\ref{kaimlr1})-(\ref{kaimlr2}) with $x_1+x_2<X$. At
steady state equation (\ref{kaimlr2}) can be used to express $x_2$ in
terms of $x_1$ and substituting this into (\ref{kaimlr1}) shows that
$x_1$ is a root of a cubic polynomial which is not identically zero. Thus
the system (\ref{kaimlr1})-(\ref{kaimlr2}) has at most three steady states.

It will be shown that the system (\ref{kaimlr1})-(\ref{kaimlr2}) admits
periodic solutions which arise in a Hopf bifurcation and homoclinic orbits.
In order to do this it suffices to show that this system admits a generic
Bogdanov-Takens bifurcation \cite{kuznetsov10}. By saying that the bifurcation
is generic we mean that it satisfies the conditions BT.0, BT.1, BT.2 and BT.3 of
\cite{kuznetsov10}. Then the desired results follow from Theorem 8.5 of
\cite{kuznetsov10} and the analysis of the normal form of the bifurcation
preceding that theorem. Let $J(x_1,x_2)$ be the linearization of the system
(\ref{kaimlr1})-(\ref{kaimlr2}) about the point $(x_1,x_2)$. Finding a
bifurcation point where the condition BT.0 is satisfied means finding a point
$(x_1,x_2)$ and a choice of the parameters of the system so that $J(x_1,x_2)$
has a double zero eigenvalue but is not itself zero. If the right hand sides
of equations (\ref{kaimlr1}) and (\ref{kaimlr2}) are denoted by $f_1$ and $f_2$
this means solving the system of four equations given by the vanishing of
$f_1$, $f_2$, ${\rm tr} J$ and $\det J$. The general strategy is to choose all
but four of the parameters and use the four equations to solve for the rest.
An obstacle to this is that the quantities resulting from this process
might fail to be positive. This obstacle was overcome by trial and error.

The equations for steady states can be written in the following form.
\begin{align}
k_4&=\frac{[k_2x_1-k_1x_2+k_3x_1(X-x_1-x_2+\delta x_1)]\label{k4sol}
[\beta(\xi+1)+\xi(X-x_1-x_2)]}{\xi(X-x_1-x_2)},\\
k_7&=\frac{(\xi+1)(-k_2x_1+k_1x_2)}{X-x_1-x_2}.\label{k7sol}
\end{align}
The linearization is
\begin{equation}
J=\left[
{\begin{array}{cc}
   -k_2-\phi(\beta)-k_3(X-2x_1-x_2+2\delta x_1)
   & k_1-\phi(\beta)+k_3x_1\\
\eta& -\omega
\end{array}}
\right]
\end{equation}
Here we have introduced the auxiliary quantities $\eta=k_2-\frac{k_7}{\xi+1}$,
$\omega=k_1+\frac{k_7}{\xi+1}$ and
$\phi(\beta)=\frac{k_4\xi (\xi+1)\beta}{[\beta (\xi+1)+\xi (X-x_1-x_2)]^2}$.
Suppose that we have a Bogdanov-Takens bifurcation. Since the trace is zero we
have that the first element in the first row of the Jacobian must be equal to
$\omega$. Hence
\begin{equation}
\phi(\beta)+\omega+k_2=-k_3[X-2(1-\delta)x_1-x_2].\label{betadef}
\end{equation}
It follows that
\begin{equation}
k_1-\phi(\beta)+k_3x_1
=\omega+k_1+k_2+k_3[X-(1-2\delta)x_1-x_2].
\end{equation}  
Since the determinant is zero we have
\begin{equation}\label{quadratic}
\omega^2+\eta[\omega+k_1+k_2+k_3(X-x_1-x_2+2\delta x_1)]=0.
\end{equation}

Choose $X=\frac32$, $x_1=1$, $x_2=\frac14$, $k_1=8$, $k_2=1$, $\delta=\frac16$,
$\xi=1$. It follows from (\ref{k7sol}) that $k_7=8$. Putting this into
(\ref{quadratic}) gives $k_3=\frac{324}{7}$. It then follows from
(\ref{betadef}) that $\phi(\beta)=\frac{44}{7}$. On the other hand
(\ref{k4sol}) implies that $k_4=\frac{128}{7}(8\beta+1)$. Combining this with
the definition of $\phi$ shows that $\beta=\frac{11}{936}$. Finally we compute
$k_4=\frac{16384}{819}$. The conclusion is that with the given choices there
is exactly one solution for the remaining parameters $(k_7,k_3,\beta,k_4)$
such that the system satisfies the condition BT.0 for a Bogdanov-Takens
bifurcation at the chosen point with coordinates
$(x_1,x_2)=\left(1,\frac14\right)$. At this point the linearization is of
the form
\begin{equation}
J=\left[
{\begin{array}{cc}
   12& 48\\
-3& -12
\end{array}}
\right].
\end{equation}
When talking about a Bogdanov-Takens bifurcation we need a system depending
on two parameters. In our example we choose these to be $\delta$ and $k_3$
and consider all other parameters in the system as fixed

As will now be explained, a calculation
shows that conditions BT.1, BT.2 and BT.3 are also satisfied so that this is
a generic Bogdanov-Takens bifurcation. For this purpose it is convenient to
transform to coordinates $y_1=-\frac13 x_2+\frac1{12}$ and $y_2=x_1+4x_2-2$
adapted to the eigenvectors of $J$. Then $\dot y_i=(J_0y)_i+Q_i(y)+O(|y|^3)$,
where $J_0$ is in Jordan form and the $Q_i$ are quadratic. In the notation of
\cite{kuznetsov10} the elements of $Q_1$ and $Q_2$ are denoted by $a_{ij}$ and
$b_{ij}$, respectively. In the present example it turns out that $a_{20}=0$
and in that case BT.1 and BT.2 are the conditions that $b_{11}\ne 0$ and
$b_{20}\ne 0$. A lengthy calculation shows that $b_{20}=-81W+168k_3>0$ and
$b_{11}=-9W+7k_3>0$, where $W=\frac{256\beta k_4}{(8\beta+1)^3}$. Here we
use the values of the parameters at the bifurcation point. The condition
BT.3 is that the linearization $J_T$ of the mapping
$(x_1,x_2,\delta,k_3)\mapsto(f_1,f_2,{\rm tr}J,\det J)$ at the bifurcation
point is non-singular. This matrix is
\begin{equation}
J_T=\left[
{\begin{array}{cccc}
   12& 48& -k_3 & -\frac5{12}\\
   -3& -12& 0 & 0\\
   -W+\frac53 k_3& -W+k_3& -2k_3& \frac5{12}\\
   9W-17k_3&9W-12k_3& 24 k_3& -2
\end{array}}
\right].
\end{equation}
and $\det J_T=-6k_3^2-\frac{27Wk_3}{4}\ne 0$.

When a generic Bogdanov-Takens bifurcation is present in a dynamical system
then there are always generic Hopf bifurcations nearby. The periodic solutions
which arise in these Hopf bifurcations are hyperbolic and may be stable
(supercritical case) or unstable (subcritical case). We may correspondingly
call the Bogdanov-Takens bifurcation super- or subcritical and it turns out
that these two cases are distinguished by the relative sign of $b_{20}$ and
$b_{11}$. In the present case the signs of these two coefficients are equal
and the bifurcation is subcritical. Hence the periodic solutions are unstable.
In comparison with the phase portrait given in \cite{kuznetsov10}, which 
corresponds to the supercritical case, the direction of the flow is reversed.
For the parameter values for which the bifurcation takes place the cubic
polynomial for $x_1$ has a double root at $x_1=1$ and must therefore have
a factor $(x_1-1)^2$. Carrying out this factorization allows a third root to
be calculated explicitly. The result is
\begin{equation}
p(x_1)=\frac{3}{728}(2457x_1-2924)(x_1-1)^2.
\end{equation}
The additional root is $x_1=\frac{2924}{2457}$ and at the corresponding
steady state $x_2=\frac{995}{4914}$. At that point the trace of the
linearization is negative and the determinant positive. Hence this steady state
is stable.

Some of these results will now be collected in a theorem.

\begin{theorem}\label{th:5}
 There are parameter values for which the system
(\ref{kaimlr1})-(\ref{kaimlr2}) has a generic Bogdanov-Takens bifurcation. In
particular, there are nearby parameter values for which it has an unstable
periodic solution and ones for which it has a homoclinic orbit. In the
case where there is an unstable periodic solution with parameter values
sufficiently close to those at the bifurcation point there are also two
stable steady states and one saddle point.
\end{theorem}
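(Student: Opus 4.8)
The plan is to verify that the system (\ref{kaimlr1})--(\ref{kaimlr2}) admits a \emph{generic} Bogdanov--Takens bifurcation at the explicit parameter values assembled in the discussion preceding the theorem, and then to read off the remaining assertions from the standard unfolding theory in \cite{kuznetsov10}. Concretely, I would organize the argument around the four genericity conditions BT.0--BT.3. For BT.0 I would take the point $(x_1,x_2)=(1,\tfrac14)$ together with $X=\tfrac32$, $k_1=8$, $k_2=1$, $\delta=\tfrac16$, $\xi=1$ and the values $k_7=8$, $k_3=\tfrac{324}{7}$, $\beta=\tfrac{11}{936}$, $k_4=\tfrac{16384}{819}$ forced in turn by (\ref{k7sol}), (\ref{quadratic}), (\ref{betadef}) and (\ref{k4sol}); one then checks that the linearization equals $\left[\begin{smallmatrix}12&48\\-3&-12\end{smallmatrix}\right]$, which has a double zero eigenvalue (trace and determinant both vanish) yet is nonzero, so BT.0 holds.

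The second block handles the nonlinear and parametric nondegeneracy. I would pass to the coordinates $y_1=-\tfrac13 x_2+\tfrac1{12}$, $y_2=x_1+4x_2-2$ adapted to the eigenstructure of $J$, bringing the linear part to Jordan form, and extract the quadratic coefficients $a_{ij},b_{ij}$ in the notation of \cite{kuznetsov10}. Since here $a_{20}=0$, the conditions BT.1 and BT.2 reduce to $b_{11}\neq0$ and $b_{20}\neq0$; I would verify that $b_{20}=-81W+168k_3$ and $b_{11}=-9W+7k_3$, with $W=\tfrac{256\beta k_4}{(8\beta+1)^3}$, are both positive at the bifurcation values. For BT.3 I would compute the Jacobian $J_T$ of the map $(x_1,x_2,\delta,k_3)\mapsto(f_1,f_2,\mathrm{tr}\,J,\det J)$ and confirm $\det J_T=-6k_3^2-\tfrac{27Wk_3}{4}\neq0$. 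With all four conditions in hand, Theorem 8.5 of \cite{kuznetsov10} together with the normal-form analysis preceding it supplies the nearby Hopf bifurcations, periodic orbits and the homoclinic orbit; because $b_{20}$ and $b_{11}$ share the same sign the bifurcation is subcritical, so the periodic solutions are unstable.

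Finally I would establish the steady-state count. At the bifurcation the cubic whose roots are the $x_1$-coordinates of the steady states has a double root at $x_1=1$, hence factors as $\tfrac{3}{728}(2457x_1-2924)(x_1-1)^2$; the simple root $x_1=\tfrac{2924}{2457}$ yields a steady state (with $x_2=\tfrac{995}{4914}$) at which the trace is negative and the determinant positive, hence asymptotically stable. I would then invoke the local structure of the subcritical Bogdanov--Takens unfolding: in the parameter region carrying the unstable periodic orbit the double equilibrium has split through the fold into a hyperbolic saddle and a stable focus, the latter being the equilibrium encircled by the unstable cycle. By continuity the third equilibrium remains stable, giving exactly two stable steady states, one saddle, and the unstable periodic solution, as claimed.

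The step I expect to be the main obstacle is the verification of BT.1--BT.2: computing the quadratic normal-form coefficients demands a careful second-order Taylor expansion of the Michaelis--Menten term about the bifurcation point after the change of variables, and one must not only show $b_{11},b_{20}\neq0$ but control their \emph{relative sign}, since that sign is precisely what pins down the subcritical character and hence the instability of the periodic orbits. A secondary delicate point is bookkeeping positivity: every derived reaction constant ($k_3$, $k_4$, $k_7$, $\beta$) must come out positive for the chosen point to be admissible, and the final paragraph relies on correctly matching the abstract phase portrait of the subcritical unfolding in \cite{kuznetsov10} (with its flow reversed relative to the supercritical picture drawn there) to the concrete three-equilibrium configuration.
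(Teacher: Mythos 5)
Your proposal is correct and follows essentially the same route as the paper: the same explicit parameter values forced by (\ref{k7sol}), (\ref{quadratic}), (\ref{betadef}), (\ref{k4sol}) for BT.0, the same adapted coordinates and coefficients $b_{20}=-81W+168k_3$, $b_{11}=-9W+7k_3$ for BT.1--BT.2, the same determinant $\det J_T=-6k_3^2-\tfrac{27Wk_3}{4}$ for BT.3, the subcriticality from the matching signs of $b_{20}$ and $b_{11}$, and the factorization $\tfrac{3}{728}(2457x_1-2924)(x_1-1)^2$ giving the distant stable steady state. The only difference is presentational: you spell out slightly more explicitly how the saddle and the stable focus emerge from the unfolding in the region containing the unstable cycle, which the paper leaves to the reference to \cite{kuznetsov10}.
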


The structural stability of the bifurcation and the fact that the limit
used to obtain this system is normally hyperbolic implies that these features
can be lifted to the system (\ref{kaim1})-(\ref{kaim4}). In more detail,
note first that this system is equivalent by rescaling to the system
(\ref{kaiml1eps})-(\ref{kaiml4eps}). Moreover we can concentrate on a fixed
value of the conserved quantity $X$. Thus it remains to consider a limit from
a three-dimensional system to a two-dimensional one. Restricting to the slow
manifold we get a regular limit of two-dimensional systems. For $\epsilon=0$
the mapping $(x_1,x_2,\delta,k_3)\mapsto(f_1,f_2,{\rm tr}J,\det J)$ has full
rank and a zero at the bifurcation point. It follows from the implicit function
theorem that it has a unique zero near the bifurcation point for $\epsilon$
small. This is a point where BT.0 is satisfied. By continuity BT.1, BT.2 and
BT.3 remain satisfied for $\epsilon$ sufficiently small and the bifurcation
remains subcritical. Thus the features listed in Theorem \ref{th:5} are also seen in the
system on the slow manifold. This implies immediately that there is a
heteroclinic orbit in the full system. The hyperbolic periodic solution in
the slow manifold is also hyperbolic as a solution of the full system. It
is of saddle type with there being both solutions which converge to it for
$t\to +\infty$ and solutions which converge to it for $t\to -\infty$. If it
could be shown that the limiting system admits a stable periodic solution for
some values of the parameters it could be concluded that the full system does
so too. We have not been able to prove the existence of stable periodic
solutions of the limiting system. To see how such a stable solution might
occur, consider the predator-prey model of Bazykin discussed in
\cite{kuznetsov10}. It has two subcritical Bogdanov-Takens bifurcations and a
stable periodic solution in a distant part of the phase space.

It turns out that it is possible to find an extension of the explicit
Bogdanov-Takens point to an explicit two-parameter family of steady states,
including a one-parameter family of points where the eigenvalue condition
for a Hopf bifurcation is satisfied. The parameters are the determinant
$\sigma$ and the trace $\tau$ of $J$ at the steady state. This family is
obtained by fixing the same quantities as in the original case, including
the coordinates $(x_1,x_2)$ of the steady state and computing the parameters
\begin{equation}
k_3=\frac{324+4\sigma+36\tau}{7},\ \
\beta=\frac{132+5\sigma+24\tau}{11232+120\sigma+1248\tau}
\end{equation}
and
\begin{equation}
k_4=\frac{(384+5\sigma+45\tau)(1536+20\sigma+180\tau)}
  {21(1404+15\sigma+156\tau)}.
\end{equation}
The Hopf points are those with $\sigma>0$ and $\tau=0$. These formulae are helpful in
finding parameter values for which the system admits an unstable periodic
solution. A solution of this type is illustrated in Fig.\ \ref{fig:periodicsol} in red for the case
$\sigma=1$, $\tau=-0.02$. In Fig.\ \subref*{fig:periodicsola} we show neighbouring solutions of the unstable periodic solution which move away from it (inward and outward spirals). A larger part of the phase space is shown in Fig.\ \subref*{fig:periodicsolb} where the Bogdanov-Takens point and the stable steady state are shown.

\begin{figure}[htbp]
	\centering
	\subfloat[Unstable periodic solution.]{\includegraphics[width=0.45\textwidth]{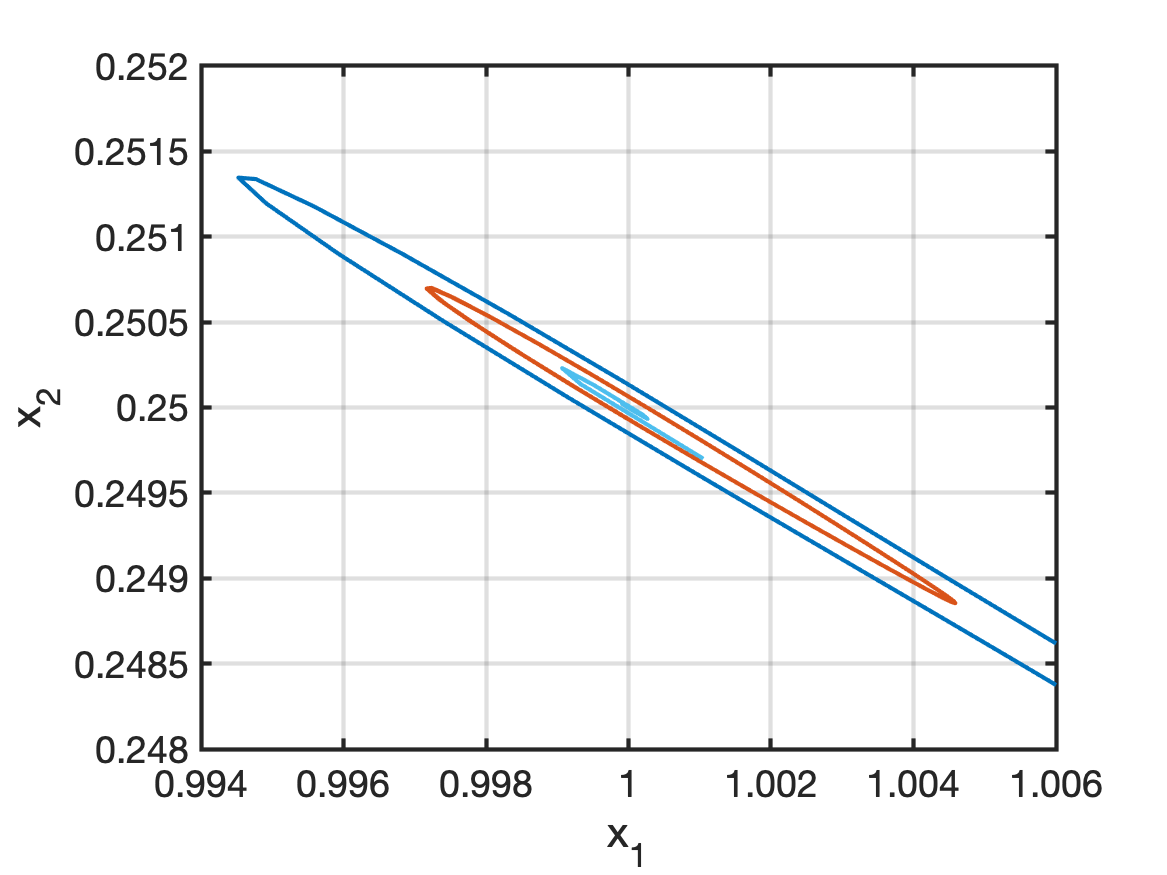}\label{fig:periodicsola}}\hspace*{3em}
	\subfloat[All steady states.]{\includegraphics[width=0.45\textwidth]{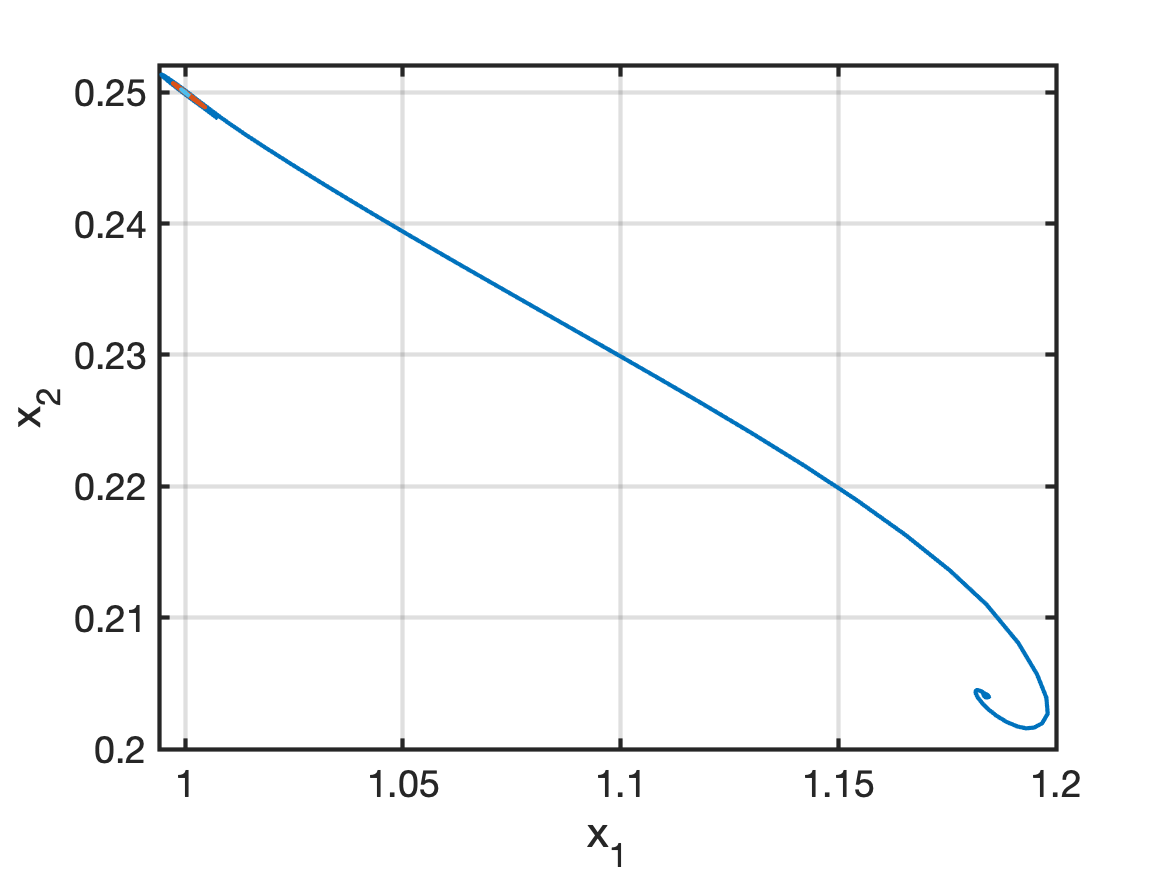}\label{fig:periodicsolb}}
	\caption{Phase diagram for $\sigma=1$, $\tau=-0.02$.}
	\label{fig:periodicsol}
\end{figure}

\section{Comparison with some more elaborate models}\label{sec:6}

In this section the results of this paper will be put into a wider context by
comparing the models analysed here with some more complicated models in the
literature which arise when studying specific biological phenomena. One of
the most exciting recent developments in medicine are immune checkpoint
therapies for cancer \cite{robert20}. Immune checkpoint molecules such as
CTLA4 and
PD-1 can result in the deactivation of T cells under certain circumstances
and this is exploited by cancer cells to evade attacks by the immune system.
Antibodies to the immune checkpoint molecules can prevent this and thus be
used in cancer therapies. This type of therapy has had remarkable success in
curing some cancers. On the other hand, although these therapies could in
principle work for all types of cancer, in practice they only work for some
cancers, notably metastatic melanoma, and even in the most favourable cases
for only a certain percentage of patients.
It is important to obtain a better understanding of the molecular mechanisms
of these therapies, so as to explain in which cases they are effective and,
hopefully, to improve them so as to increase the range of their efficacy.

Up to now the most effective types of immune checkpoint monotherapy are those
involving PD-1. In this context it is important to understand how the
activation of PD-1 leads to suppression of T cell activity. This has been
studied experimentally in \cite{hui17}. The main conclusion of that work is
that the inhibition of T cell activity caused by PD-1 is due less to a
decrease in signalling via the T cell receptor than to a decrease of the
second signal coming from CD28. This leads to a certain mechanistic model of
how the influence of PD-1 is exerted. In an effort to obtain a better
understanding of the mechanisms involved a mathematical model was
introduced in \cite{arulraj18}. Simulations of that model gave results
agreeing well with the results of \cite{hui17} and at the same time suggesting
an additional path by which PD-1 can influence T cell signalling. In the
path highlighted in \cite{hui17} activation of Lck plays an important role.
The suggestion in \cite{arulraj18} is that this change in the activation state
of Lck could have an indirect influence via phosphorylation of molecules
downstream of the T cell receptor and CD28. The model of \cite{arulraj18}
consists of several modules. One of these describes the activation of Lck
and plays a central role.

In the context of their model of Lck regulation the authors of
\cite{arulraj18} cite a model given in \cite{rohrs16}. The latter includes
complexes which are
intermediates in the autophosphorylation reactions. This would correspond
in the case with one phosphorylation site to replacing the reaction
2X$\to$X+Y by the reactions 2X$\to $X${}_2$$\to $X+Y, where X${}_2$ is the
complex formed when two molecules of X bind to each other. It also includes
certain complexes of Lck with Csk which are analogous to XF in the basic
model introduced in section \ref{sec:2}. According to \cite{rohrs16} the inclusion
of these complexes was necessary to obtain a good agreement between the
results of simulations and the experimental data of \cite{hui14}. The
model of \cite{rohrs16} includes no phosphatases and so it is clear that in
that case the evolution must converge to the state where only the unique
maximally phosphorylated state is present. The non-trivial characteristics
of the evolution have to do with the way in which the solution approaches that
state.

One difference of the model of \cite{arulraj18} compared to that of
\cite{kaimachnikov09} is that it includes five forms of Lck rather than four.
The model contains two different forms of doubly phosphorylated Lck which
are supposed to differ by the order in which the two sites were
phosphorylated. The issue of the order of phosphorylation is mentioned in
\cite{hui14} but we are not aware of any justification for including the fifth
form in the model. It is stated in \cite{arulraj18} that the model includes
autophosphorylation of Lck but in the equations the dependence on the
concentrations of the different forms of Lck is everywhere linear and this
does not seem to be consistent.

In \cite{schulze14} the author discusses the model of \cite{kaimachnikov09}
and the alternative where the Michaelis-Menten term in that model is replaced
by a linear one. When that simplification is made bistability is eliminated.
The author presents unpublished data of Acuto and Nika which addresses the
issue of bistability in Lck experimentally. The idea is that if
bistability was present the distribution of the measurements of certain
quantities in a population should be bimodal, i.e. the graph should exhibit
two maxima. In these data most (but not all) of the graphs have a unique maximum
and this is taken as evidence that there is no bistability in the system.
However no detailed justification for this conclusion is given. The
significance of this conclusion is that if bistability were present in the
biological system this would mean that the simplified model would not be
sufficient. In \cite{schulze14} the simplified model is used. The advantage
is that there are less parameters in the simplified model and that their
values can be more strongly constrained by experimental data. 

Another biological phenomenon where Lck plays a central role is that of T
cell activation. It will now be discussed how Lck has been modelled in
the literature on that subject. One of the first and most important
steps in T cell activation is the phosphorylation of the ITAMs (immunoreceptor
tyrosine-based activation motifs) of the T cell receptor complex. The most
important kinase carrying out this process is Lck. In one successful model
of early T cell activation \cite{francois13} Lck is not one of the chemical
species included in the model. In the process of ITAM phosphorylation Lck
is treated as an external kinase whose activity is represented by a reaction
constant. It was proved in \cite{rendall17} that this model can exhibit more
than one steady state.
The model of \cite{francois13} is a radical simplification of a more
extensive one introduced in \cite{altanbonnet05}. In the latter model activated
Lck is one of the chemical species included. It takes part in many reactions
where it binds to a complex X containing the T cell receptor and some other
molecules and then phosphorylates some element of the complex. The kinetics of
these reactions is extended Michaelis-Menten. Other forms of Lck play a role
in mechanisms represented in this model but they do not appear explicitly.
Another model implementing some of the same mechanisms was presented in
\cite{lipniacki08}. It includes four forms of Lck arising from
phosphorylation at Y394 and the serine S59. The serine phosphorylation may
have an important role to play in T cell activation but will not be considered
further here. The tyrosine phosphorylation is supposed to occur by
autophosphorylation in {\it trans} but the Lck molecules responsible for the
catalysis are supposed to belong to a different population to those being
phosphorylated. The former population is treated as external and so no
nonlinearity arises from this process. The model of \cite{lipniacki08}
exhibits bistability.

\section{Conclusions and outlook}\label{sec:7}

In this paper we proved that the model of \cite{doherty15} of an enzyme with a
single site subject to autophosphorylation in {\it trans} can exhibit
bistability. This improves on the simulations in \cite{doherty15} showing
this type of behaviour for specific parameter values by identifying a large
part of parameter space where it occurs. We also show that in the context of
this model multiple steady states can only occur when the phosphorylation
increases the activity of the enzyme. It is shown that in a case where
phosphorylation decreases the activity of the enzyme multiple steady states
can also occur but this requires a more complicated model with an external
kinase which is operating well away from the Michaelis-Menten limit.

We related the models studied in this paper to other models involving Lck
which have been applied in the literature to describe particular biological
phenomena. It is of interest to consider the possible biological meaning of
the results of this paper. Switches arising through bistability are a
well-known phenomenon in biology and the bistability found in the regulation
of Lck might be of importance for immunology as a mechanism by which the
activity of immune cells is switched off or on in certain circumstances. As
discussed in the last section it seems unclear on the basis of experimental
evidence whether bistability due to the properties of Lck occurs in
biologically interesting circumstances. We are not aware that oscillations
in the activation of Lck have been observed experimentally. The
biological significance of those periodic solutions whose existence we
proved is limited by their instability.

This paper is a preliminary exploration of dynamical features of models for
Lck involving autophosphorylation. At this point it is appropriate to think
about what biological issues could be illuminated by continuing these
investigations. A question of great biological and medical interest,
already mentioned in the last section, is that of the mechanism by which
ligation of the receptor PD-1 leads to the suppression of the activity of T
cells. (For a recent review of this topic see \cite{patsoukis20}.) Normally
the activation of a T cell requires both a signal from the T cell receptor
and a second signal from CD28. Both of these receptors get phosphorylated.
(In the case of the T cell receptor it is rather the associated proteins
CD3 and the $\zeta$-chain which are phosphorylated). A question which is
apparently still controversial is whether the main effect of PD-1 activation is
desphosphorylation of the T cell receptor or that of CD28. The conclusion of
\cite{hui17} is that it is CD28 but this has been disputed in \cite{mizuno19},
where it has been suggested that this finding of \cite{hui17} was an artefact
of using a cell-free system and that in reality it is dephosphorylation of
the T cell receptor which is the most important consequence of the activation
of PD-1. This indicates that better understanding of these phenomena is
necessary. The authors of \cite{arulraj18} claim that their model can reproduce
the results of \cite{hui17}. Could that model, or a related one, reproduce the
results of \cite{mizuno19}?

There is a wide consensus that, whatever the targets of dephosphorylation
resulting from the activation of PD-1, the phosphatase which carries it out
is SHP-2. There is one caveat here since it was observed in \cite{rota18}
that PD-1 can have an inhibitory effect on T cells in the absence of
SHP-2. This issue deserves further investigation. Another interesting
question is that of the way in which Lck interacts with SHP-2 and PD-1.
When PD-1 is fully activated it is phosphorylated at two sites. These
provide binding sites for SHP-2. The phosphorylation of PD-1 is catalysed
primarily by Lck \cite{hui17}. SHP-2 can dephosphorylate PD-1 and thus
promote its own unbinding. This effect is opposed by Lck. Here there is
an incoherent feed-forward loop \cite{alon06}. On the one hand Lck causes
phosphorylation of PD-1 by a direct route and on the other hand it causes its
dephosphorylation by an indirect route. These interactions are described by
one of the modules in the model of \cite{arulraj18}. They are sufficiently
complex that it would be desirable to carry out a deeper mathematical
analysis of their dynamics.

\section*{Acknowledgments}  LMK acknowledges support from the European Union Horizon 2020 research and innovation programmes under the Marie Sk\l odowska-Curie grant 
agreement No.\ 777826 (NoMADS), the Cantab Capital Institute for the Mathematics of Information and Magdalene College, Cambridge (Nevile Research Fellowship).

\end{document}